\pgfplotsset{filter discard warning=false}
\newcommand{\bid}[1]{\mbox{\textbf{#1}}}
\newcommand{\anset}[1]{{{\langle #1 \rangle}}}
\newcommand{\concat}{\circ}
\def\popq{\bid{remove}}
\def\pushq{\bid{insert}}
\def\nexti{\bid{next}}
\def\pkwd{{\mathit{keyword}}}
\def\itgrs{{\mathit{pathGroups}}}
\def\parents{\bid{parents}}
\def\prodAns{\bid{produceAnswers}}
\def\continue{\bid{continue}}
\def\true{\bid{true}}
\def\false{\bid{false}}
\def\firstp{{\bid{first}}}
\def\predp{{\bid{predecessor}}}
\def\fweight{w}
\def\act{\textit{active}}
\def\vis{\textit{visited}}
\def\inan{\textit{in-answer}}
\def\V{V}
\def\G{G}
\title{A Practically Efficient Algorithm for Generating Answers to Keyword Search over Data
Graphs\footnote{This work was supported by the Israel Science Foundation (Grant No.~1632/12).}\footnote{This paper is the full version of~\cite{ICDT16}.}}
\titlerunning{An Algorithm for Keyword Search over Data Graphs} 
\author{Konstantin Golenberg}
\author{Yehoshua Sagiv}
\affil{The Hebrew University of Jerusalem, Israel}
\subjclass{H.3.3 [Information Storage and Retrieval]: Information Search and Retrieval---\emph{Search process},
H.2.4 [Database Management]: Systems---\emph{Query processing}}
\keywords{Keyword search over data graphs, subtree enumeration by height, top-k answers, efficiency}
\begin{document}

\maketitle
\begin{abstract} 
  In keyword search over a data graph, an answer is a non-redundant
  subtree that contains all the keywords of the query. A naive
  approach to producing all the answers by increasing height is to
  generalize Dijkstra's algorithm to enumerating all acyclic paths by
  increasing weight. The idea of \emph{freezing} is introduced so that
  (most) non-shortest paths are generated only if they are actually
  needed for producing answers. The resulting algorithm for generating
  subtrees, called \emph{GTF}, is subtle and its proof of correctness
  is intricate.  Extensive experiments show that GTF outperforms
  existing systems, even ones that for efficiency's sake are
  incomplete (i.e.,~cannot produce all the answers).  In particular,
  GTF is scalable and performs well even on large data graphs and when
  many answers are needed.
\end{abstract}

\section{\label{sec:intro}Introduction}

Keyword search over data graphs is a convenient paradigm of querying
semistructured and linked data. Answers, however, are similar to those
obtained from a database system, in the sense that they are succinct
(rather than just relevant documents) and include semantics (in the
form of entities and relationships) and not merely free text. Data
graphs can be built from a variety of formats, such as XML, relational
databases, RDF and social networks.  They can also be obtained from
the amalgamation of many heterogeneous sources.  When it comes to
querying data graphs, keyword search alleviates their lack of
coherence and facilitates easy search for precise answers, as if
users deal with a traditional database system.

In this paper, we address the issue of efficiency. Computing keyword
queries over data graphs is much more involved than evaluation of
relational expressions.  Quite a few systems have been developed
(see~\cite{tkdeCW14} for details). However, they 
fall short of the degree of efficiency and scalability that is
required in practice.  Some algorithms sacrifice \emph{completeness}
for the sake of efficiency; that is, they are not capable of
generating all the answers and, consequently, may miss some relevant ones.

We present a novel algorithm,
called \emph{Generating Trees with Freezing} (GTF).
We start with a straightforward generalization of Dijkstra's
shortest-path algorithm to the task of constructing all simple
(i.e.,~acyclic) paths, rather than just the shortest ones.  Our main
contribution is incorporating the \emph{freezing} technique that
enhances efficiency by up to one \linebreak\linebreak\linebreak order of magnitude, compared with the
naive generalization of Dijkstra's algorithm. The main idea is to
avoid the construction of most non-shortest paths until they are
actually needed in answers. Freezing may seem intuitively clear, but
making it work involves subtle details and requires an intricate proof
of correctness.

Our main theoretical contribution is the algorithm GTF, which
incorporates freezing, and its proof of correctness.  Our main
practical contribution is showing experimentally (in
Section~\ref{sec:experiments_sum} 
and Appendix~\ref{sec:experiments}
)
that GTF is both more efficient and more scalable than existing
systems. This contribution is especially significant in light of the
following.  First, GTF is complete (i.e.,~it does not miss answers);
moreover, we show experimentally that not missing answers is important
in practice.  Second, the order of generating answers is by increasing
height.  This order is commonly deemed a good strategy for an initial
ranking that is likely to be in a good correlation with the final one
(i.e.,~by increasing weight).

\tikzstyle{vertex}=[ellipse,draw,minimum size=14pt,inner sep=1pt]   
\tikzstyle{keyword}=[rectangle,draw,minimum size=1.5pt,inner sep=2pt]  
\usetikzlibrary{positioning}

\begin{figure*}
\begin{minipage}{0.59\textwidth}
\begin{center}
  \begin{tikzpicture}[->,>=stealth',shorten >=0pt,auto,node distance=0.3cm]

  
  \node[vertex] (city) at (0,0) {$\mathit{city}$};
  \node[vertex] (prov) [below left = of city] {$\mathit{province}$};
  \node[vertex] (country) [below right = of city] {$\mathit{country}$};
  \node[vertex] (river) [dashed, left = 1 of city] {$\mathit{river}$};
  \node (dummy) at (0,0.7) {};
  
  \node[keyword] (paris) [right = of city] {$\mathit{Paris}$};  
  \node[keyword] (fr) [below  = 0.75 of city] {$\mathit{France}$};  
  \node[keyword] (ile) [below left = of prov] {$\mathit{Ile}$};  
  \node[keyword] (de) [below = 0.15 of prov] {$\mathit{de}$};  
  \node[keyword] (seine) [dashed, left = 0.7 of river] {$\mathit{Seine}$};
    
  \draw (city) edge (prov);
  \draw (city) edge (country);
  \draw (prov) edge (country);

  \draw (city) edge (paris);
  \draw (prov) edge (ile);
  \draw (prov) edge (de);
  \draw (prov) edge (fr);
  \draw (country) edge (fr);
  
  \draw (river) [dashed] edge (city);
  \draw (river) [dashed] edge (seine);

	\end{tikzpicture}
	\vskip1em
   \caption{A snippet of a data graph}\label{fig:datagraph}
\end{center}
\end{minipage}
\begin{minipage}{0.40\textwidth}
\begin{center}
  \begin{tikzpicture}[->,>=stealth',shorten >=0pt,auto,node distance=0.3cm]
  

  \node[vertex] (city_a4) at (0,0) {$\mathit{city}$};
  \node[vertex] (prov_a4) [below right = of city_a4] {$\mathit{province}$};
  \node[vertex] (river) [above = of city_a4] {$\mathit{river}$};
    
  \node[keyword] (paris_a4) [below left = of city_a4] {$\mathit{Paris}$};  
  \node[keyword] (fr_a4) [below  = of  prov_a4] {$\mathit{France}$};    
  
  \draw (city_a4) edge (prov_a4);
  \draw (city_a4) edge (paris_a4);
  \draw (prov_a4) edge (fr_a4);
  \draw (river) edge (city_a4);
  
  \end{tikzpicture}
  \caption{Redundant subtree\label{fig:redun_ans}}
\end{center}
\end{minipage}
\end{figure*}

\begin{figure}
  \centering
  \begin{tikzpicture}[->,>=stealth',shorten >=0pt,auto,node distance=0.3cm]
  

  \node[vertex] (city_a1) at (0,0) {$\mathit{city}$};
  \node[vertex] (prov_a1) [below right = of city_a1] {$\mathit{province}$};
    
  \node[keyword] (paris_a1) [below left = of city_a1] {$\mathit{Paris}$};  
  \node[keyword] (fr_a1) [below  = of  prov_a1] {$\mathit{France}$};    
  \node (a1) [below = 2 of city_a1] {$A_1$};
  
  \draw (city_a1) edge (prov_a1);
  \draw (city_a1) edge (paris_a1);
  \draw (prov_a1) edge (fr_a1);


  \node[vertex] (city_a2) at (4.5,0) {$\mathit{city}$};
  \node[vertex] (country_a2) [below right = of city_a2] {$\mathit{country}$};
    
  \node[keyword] (paris_a2) [below left = of city_a2] {$\mathit{Paris}$};  
  \node[keyword] (fr_a2) [below  = of  country_a2] {$\mathit{France}$};  
    
  \draw (city_a2) edge (country_a2);
  \draw (city_a2) edge (paris_a2);
  \draw (country_a2) edge (fr_a2);
  \node (a2) [below = 2 of city_a2] {$A_2$};
  
  \node[vertex] (city_a3) at (9,0) {$\mathit{city}$};
  \node[vertex] (prov_a3) [below right = of city_a3] {$\mathit{province}$};
  \node[vertex] (country_a3) [below = of prov_a3] {$\mathit{country}$};
    
  \node[keyword] (paris_a3) [below left = of city_a3] {$\mathit{Paris}$};  
  \node[keyword] (fr_a3) [below  = of  country_a3] {$\mathit{France}$};    
  \node (a3) [below = 2 of city_a3] {$A_3$};
  
  \draw (city_a3) edge (prov_a3);
  \draw (city_a3) edge (paris_a3);
  \draw (prov_a3) edge (country_a3);
  \draw (country_a3) edge (fr_a3);
  
  \end{tikzpicture}   
  \caption{Answers}\label{fig:answers}
  \vspace{-1em}
\end{figure}

\section{Preliminaries}\label{sec:prelim}
We model data as a directed graph $G$, similarly to~\cite{icdeBHNCS02}.  
Data graphs can be constructed from a variety of formants (e.g.,~RDB, XML 
and RDF). 
Nodes represent entities and relationships, while edges correspond to
connections among them (e.g.,~foreign-key references when the data
graph is constructed from a relational database).  We assume that text
appears only in the nodes. This is not a limitation, because we can
always split an edge (with text) so that it passes through a
node. Some nodes are for keywords, rather than entities and
relationships. In particular, for each keyword $k$ that appears in the
data graph, there is a dedicated node.  By a slight abuse of notation,
we do not distinguish between a keyword $k$ and its node---both are
called \emph{keyword} and denoted by $k$. For all nodes $v$ of the
data graph that contain a keyword $k$, there is a directed edge from
$v$ to $k$.  Thus, keywords have only incoming edges.

Figure~\ref{fig:datagraph} shows a snippet of a data graph.
The dashed part should be ignored unless explicitly stated otherwise.
Ordinary nodes are shown as
ovals.  For clarity, the type of each node appears inside the oval.
Keyword nodes are depicted as rectangles. To keep the figure small,
only a few of the keywords that appear in the graph are shown as
nodes. For example, a type is also a keyword and has its own node in the full
graph. For each oval, there is an edge to every keyword that it contains.

Let $G=(V,E)$ be a directed data graph, where $V$ and $E$ are the sets
of nodes and edges, respectively.
A directed path is denoted by $\langle v_1,\ldots,v_m \rangle$.
We only consider \emph{rooted} (and, hence, directed) subtrees $T$ of
$\G$. That is, $T$ has a unique node $r$, such that for all nodes $u$
of $T$, there is exactly one path in $T$ from $r$ to $u$.
 Consider a query $K$, that is, a set of at least two
keywords.  A \emph{$K$-subtree} is a rooted subtree of $G$, such that
its leaves are exactly the keywords of $K$.  We say that a node $v\in
V$ is a \emph{$K$-root} if it is the root of some $K$-subtree of $G$.  It is
observed in~\cite{icdeBHNCS02} that $v$ is a $K$-root if and only if
for all $k\in K$, there is a path in $G$ from $v$ to $k$.
An \emph{answer} to $K$ is a $K$-subtree $T$ that is \emph{non-redundant}
(or \emph{reduced}) in the sense that no proper subtree $T'$ of $T$ is
also a $K$-subtree. It is easy to show that a $K$-subtree $T$ of
$G$ is an answer 
if and only if the root of $T$ has at least two
children. Even if $v$ is a $K$-root, it does not necessarily follow
that there is an answer to $K$ that is rooted at $v$ (because it is possible
that in all $K$-subtrees rooted at $v$, there is only one child of $v$).

Figure~\ref{fig:answers} shows three answers to the query
$\left\{\mathit{France},\mathit{Paris}\right\}$ over the data graph of
Figure~\ref{fig:datagraph}.  The answer $A_1$ means that the city Paris is
located in a province containing the word France in its name.
The answer $A_2$ states that the city Paris is located in
the country France.  Finally, the answer $A_3$ means that Paris is
located in a province which is located in France.  

Now, consider also the dashed part of Figure~\ref{fig:datagraph}, that
is, the keyword $\mathit{Seine}$ and the node $\mathit{river}$ with
its outgoing edges.  There is a path from $\mathit{river}$ to every keyword
of $K=\left\{\mathit{France},\mathit{Paris}\right\}$. Hence,
$\mathit{river}$ is a $K$-root. However, the $K$-subtree of
Figure~\ref{fig:redun_ans} is not an answer to $K$, 
because its root has only one child.

For ranking, the nodes and edges of the data graph have positive
weights.  The \emph{weight} of a path (or a tree) is the sum of
weights of all its nodes and edges. The rank of an answer is inversely
proportional to its weight.  The \emph{height} of a tree is the
maximal weight over all paths from the root to any leaf
(which is a keyword of the query).
For example, suppose that the weight of each node and edge is $1$.
The heights of the answers $A_1$ and $A_3$ (of
Figure~\ref{fig:answers}) are $5$ and $7$, respectively. In $A_1$, the
path from the root to $\mathit{France}$ is a minimal
(i.e.,~shortest) one between these two nodes, in the whole graph, and
its weight is $5$.  In $A_3$, however, the path from the root (which
is the same as in $A_1$) to $\mathit{France}$ has a higher weight, namely,~$7$.

\section{The GTF Algorithm}\label{sec:gtf}
\subsection{The Naive Approach}\label{sec:naive}

Consider a query $K=\left\{{k_1,\ldots,k_n}\right\}$.
In~\cite{icdeBHNCS02}, they use a backward shortest-path iterator from
each keyword node $k_i$. That is, starting at each $k_i$, they apply
Dijkstra's shortest-path algorithm in the opposite direction of the
edges. If a node $v$ is reached by the backward iterators from all the $k_i$,
then $v$ is a $K$-root (and, hence, might be the root of some answers).
In this way, answers are generated by increasing height.
However, this approach can only find answers that consist of 
shortest paths from the root to the keyword nodes.
Hence, it misses answers (e.g.,~it cannot produce $A_3$ of 
Figure~\ref{fig:answers}).

Dijkstra's algorithm can be straightforwardly generalized to construct
all the simple (i.e.,~acyclic) paths by increasing weight.  
This approach is used\footnote{They used it on a small \emph{summary} graph
to construct database queries from keywords.}
in~\cite{icdeTWRC09} and it consists of two parts: 
path construction and answer production.  Each constructed path is from
some node of $G$ to a keyword of $K$. Since paths are constructed
backwards, the algorithm starts simultaneously from all the keyword
nodes of $K$. It uses a single priority queue to generate, by increasing weight,
all simple paths to every keyword node of $K$.
When the algorithm discovers that a node $v$ is a
$K$-root (i.e.,~there is a path from $v$ to every $k_i$), it
starts producing answers rooted at $v$. This is done by considering
every combination of paths $p_1,\ldots,p_n$, such that $p_i$ is from
$v$ to $k_i$ ($1\le i\le n$). If the combination is a non-redundant
$K$-subtree of $G$, then it is produced as an answer.
It should be noted that in~\cite{icdeTWRC09}, answers are subgraphs;
hence, every combination of paths $p_1,\ldots,p_n$ is an answer.
We choose to produce subtrees as answers for two reasons.
First, in the experiments of Section~\ref{sec:experiments_sum}, we compare
our approach with other systems that produce subtrees.
Second, it is easier for users to understand answers that are
presented as subtrees, rather than subgraphs.

\tikzstyle{vertex}=[ellipse,draw,minimum size=14pt,inner sep=1pt]  
\tikzstyle{keyword}=[rectangle,draw,minimum size=1.5pt,inner sep=2pt]  
\usetikzlibrary{positioning}

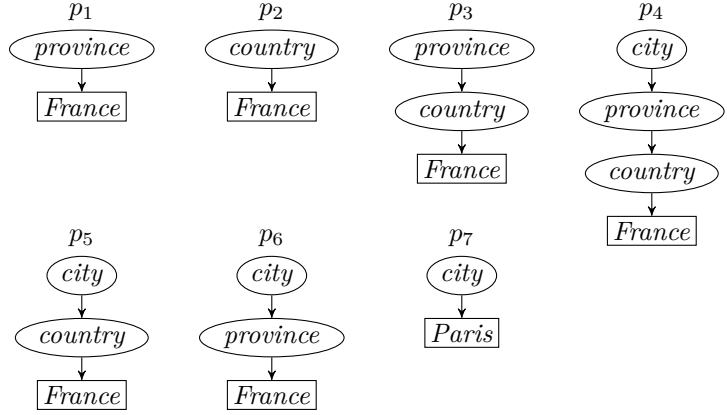
\begin{figure}[tb]
\centering
\begin{tikzpicture}[->,>=stealth',shorten >=0pt,auto,node distance=0.3cm]

  
  \node[vertex] (prov) at (0,0) {$\mathit{province}$};
  \node[keyword] (fr) [below  = of prov] {$\mathit{France}$};  
  \draw (prov) edge (fr);

  \node[vertex] (coun) at (2.5,0) {$\mathit{country}$};
  \node[keyword] (fr) [below  = of coun] {$\mathit{France}$};  
  \draw (coun) edge (fr);

  \node[vertex] (prov) at (5,0) {$\mathit{province}$};
  \node[vertex] (coun) [below  = of prov] {$\mathit{country}$};
  \node[keyword] (fr) [below  = of coun] {$\mathit{France}$};  
  \draw (prov) edge (coun);
  \draw (coun) edge (fr);
  
  \node[vertex] (city) at (7.5,0) {$\mathit{city}$};
  \node[vertex] (prov) [below  = of city] {$\mathit{province}$};
  \node[vertex] (coun)  [below  = of prov] {$\mathit{country}$};
  \node[keyword] (fr) [below  = of coun] {$\mathit{France}$};  
  \draw (city) edge (prov);
  \draw (prov) edge (coun);
  \draw (coun) edge (fr);
  

  \node[vertex] (city) at (0,-3) {$\mathit{city}$};
  \node[vertex] (coun) [below  = of city] {$\mathit{country}$};
  \node[keyword] (fr) [below  = of coun] {$\mathit{France}$};  
  \draw (city) edge (coun);
  \draw (coun) edge (fr);

  \node[vertex] (city) at (2.5,-3) {$\mathit{city}$};
  \node[vertex] (prov) [below  = of city] {$\mathit{province}$};
  \node[keyword] (fr) [below  = of prov] {$\mathit{France}$};  
  \draw (city) edge (prov);
  \draw (prov) edge (fr);

  \node[vertex] (city) at (5,-3) {$\mathit{city}$};
  \node[keyword] (par) [below  = of city] {$\mathit{Paris}$};  
  \draw (city) edge (par);

  \node (p1) at (0,0.5) {$p_1$};
  \node (p2) at (2.5,0.5) {$p_2$};
  \node (p3) at (5,0.5) {$p_3$};
  \node (p4) at (7.5,0.5) {$p_4$};
  
  \node (p5) at (0,-2.5) {$p_5$};
  \node (p6) at (2.5,-2.5) {$p_6$};
  \node (p7) at (5,-2.5) {$p_7$};
\end{tikzpicture}
\caption{\label{fig:gt} Paths to keywords in the graph snippet of Figure~\ref{fig:datagraph}}
\end{figure}

The drawback of the above approach is constructing a large number of
paths that are never used in any of the generated answers.  To
overcome this problem, the next section introduces the technique of
\emph{freezing}, thereby most non-minimal paths are generated only if
they are actually needed to produce answers.
Section~\ref{sec:pseudocode} describes the algorithm
\emph{Generating Trees with Freezing} (GTF)
that employs this technique.

To save space (when constructing all simple paths),
we use the common technique known as \emph{tree of paths}.  
In particular, a path $p$ is a linked list,
such that its first node points to the rest of $p$.
As an example, consider the graph snippet of Figure~\ref{fig:datagraph}.
The paths that lead to the keyword $\mathit{France}$ are 
$p_1$,~$p_2$, $p_3$, $p_4$, $p_5$~and $p_6$, shown in Figure~\ref{fig:gt}.
Their tree of paths is presented in Figure~\ref{fig:pathsTree}.

Since we build paths backwards, a data graph is preprocessed to
produce for each node $v$ the set of its \emph{parents}, that is, the
set of nodes $v'$, such that $(v',v)$ is an edge of the data graph. We
use the following notation.  Given a path $p$ that starts at a node
$v$, the extension of $p$ with a parent $v'$ of $v$ is denoted by $v'
\rightarrow p$. Note that $v'$ is the first node of $v' \rightarrow p$
and $v$ is the second one.

\tikzstyle{vertex}=[ellipse,draw,minimum size=14pt,inner sep=1pt]  
\tikzstyle{keyword}=[rectangle,draw,minimum size=1.5pt,inner sep=2pt]  
\usetikzlibrary{positioning}

\begin{figure}
\centering
\begin{tikzpicture}[->,>=stealth',shorten >=0pt,auto,node distance=0.3cm]
  
  \node[keyword] (k1) at (0,0) {$\mathit{France}$};
  \node[vertex] (v2p'1) [above right = of k1] {$\mathit{province}$ };
  \node[right = 0 of v2p'1]{$p_1$};

  \node[vertex] (v1p1) [above right = of v2p'1] {$\mathit{city}$ };
  \node[right = 0 of v1p1]{$p_6$};

  \node[vertex] (v3p'3) [above left = of k1] {$\mathit{country}$ };
  \node[left = 0 of v3p'3]{$p_2$};

  \node[vertex] (v1p3) [above right = of v3p'3] {$\mathit{city}$ };
  \node[right = 0 of v1p3]{$p_5$};

  \node[vertex] (v2p'2) [above left = of v3p'3] {$\mathit{province}$ };
  \node[left = 0 of v2p'2]{$p_3$};

  \node[vertex] (v1p2) [above = of v2p'2] {$\mathit{city}$ };
  \node[left = 0 of v1p2]{$p_4$};

  \draw(v2p'1) edge (k1);
  \draw(v3p'3) edge (k1);
  \draw(v1p1)  edge (v2p'1);
  \draw(v1p3)  edge (v3p'3);
  \draw(v2p'2) edge (v3p'3);
  \draw(v1p2)  edge (v2p'2);

\end{tikzpicture}
\caption{\label{fig:pathsTree} Tree of paths}
\end{figure}

\subsection{Incorporating Freezing}

The general idea of freezing is to avoid the construction of paths that cannot contribute to production of answers.
To achieve that, a non-minimal path $p$ is frozen
until it is certain that $p$ can reach (when constructed backwards)
a $K$-root.
In particular, the first path that reaches a node $v$ is always a minimal one.
When additional paths reach $v$, they are frozen there until
$v$ is discovered to be on a path from a $K$-root to a keyword node.
The process of answer production in the GTF algorithm remains the same as in the naive approach.

We now describe some details about the implementation of GTF.  We mark
nodes of the data graph as either $\act$, $\vis$ or $\inan$.  Since we
simultaneously construct paths to all the keywords (of the query
$K=\left\{{k_1,\ldots,k_n}\right\}$), a node has a separate mark for
each keyword.  The marks of a node $v$ are stored in the array
$v.\mathit{marks}$, which has an entry for each keyword.  For a
keyword $k_i$, the mark of $v$ (i.e.,~$v.\mathit{marks[k_i]}$) means
the following.  Node $v$ is $\act$ if we have not yet discovered that
there is a path from $v$ to $k_i$.  Node $v$ is $\vis$ if a minimal
path from $v$ to $k_i$ has been produced.  And $v$ is marked as
$\inan$ when we discover for the first time that $v$ is on a path
from some $K$-root to $k_i$.

If $v.\mathit{marks[k_i]}$ is $\vis$ and a path $p$ from $v$ to $k_i$
is removed from the queue, then $p$ is \emph{frozen} at
$v$.  Frozen paths from $v$ to $k_i$ are stored in a
dedicated list $v.\mathit{frozen}[k_i]$.  The paths of
$v.\mathit{frozen}[k_i]$ are \emph{unfrozen} (i.e.,~are moved back into
the queue) when $v.\mathit{marks[k_i]}$ is changed to $\inan$.

We now describe the execution of GTF on the graph snippet of
Figure~\ref{fig:datagraph}, assuming that the query is
$K=\left\{\mathit{France}, \mathit{Paris}\right\}$. 
Initially, two paths $\langle\mathit{France}\rangle$ and 
$\langle\mathit{Paris}\rangle$,
each consisting of one keyword of $K$, are inserted into the queue,
where lower weight means higher priority.
Next, the top of the queue is removed; suppose that it is 
$\langle\mathit{France}\rangle$.
First, we change $\mathit{France}.\mathit{marks}[\mathit{France}]$ to $\vis$.
Second, for each parent $v$ of $\mathit{France}$,
the path $v \rightarrow \mathit{France}$ is inserted into the queue;
namely, these are the paths $p_1$ and $p_2$ of Figure~\ref{fig:gt}.
We continue to iterate in this way. Suppose that now 
$\langle\mathit{Paris}\rangle$ has the lowest weight.
So, it is removed from the queue,
$\mathit{Paris}.\mathit{marks}[\mathit{Paris}]$ is changed to $\vis$,
and the path $p_7$ (of Figure~\ref{fig:gt}) is inserted into the queue.

Now, let the path $p_1$ be removed from the queue.
As a result, $\mathit{province}.\mathit{marks}[\mathit{France}]$ is
changed to $\vis$, and the path $p_6= \mathit{city} \rightarrow p_1$
is inserted into the queue.  Next, assume that $p_2$ is removed from
the queue.  So, $\mathit{country}.\mathit{marks}[\mathit{France}]$ is
changed to $\vis$, and the paths $p_3= \mathit{province} \rightarrow
p_2$ and $p_5= \mathit{city} \rightarrow p_2$ are inserted into the queue.

Now, suppose that $p_3$ is at the top of the queue.
So, $p_3$ is removed and immediately frozen at $\mathit{province}$
(i.e.,~added to $\mathit{province}.\mathit{frozen}[\mathit{France}]$),
because $\mathit{province}.\mathit{marks}[\mathit{France}]=\vis$.
Consequently, no paths are added to the queue in this iteration.
Next, assume that $p_6$ is removed from the queue.
The value of $\mathit{city}.\mathit{marks}[\mathit{France}]$ is changed to 
$\vis$ and no paths are inserted into the queue,
because $\mathit{city}$ has no incoming edges.

Now, suppose that $p_7$ is at the top of the queue.  So, it is removed
and $\mathit{city}.\mathit{marks}[\mathit{Paris}]$ is changed to
$\vis$.  Currently, both
$\mathit{city}.\mathit{marks}[\mathit{Paris}]$ and
$\mathit{city}.\mathit{marks}[\mathit{France}]$ are $\vis$.  That is,
there is a path from $\mathit{city}$ to all the keywords of the query
$\left\{\mathit{France}, \mathit{Paris}\right\}$.  Recall that the
paths that have reached $\mathit{city}$ so far are $p_6$ and $p_7$.
For each one of those paths $p$, the following is done,
assuming that $p$ ends at the keyword $k$.
For each node $v$ of $p$, we change the mark of $v$ for $k$ to $\inan$ 
and unfreeze paths to $k$ that are frozen at $v$.
Doing it for $p_6$ means that
$\mathit{city}.\mathit{marks}[\mathit{France}]$,
$\mathit{province}.\mathit{marks}[\mathit{France}]$ and
$\mathit{France}.\mathit{marks}[\mathit{France}]$ are all changed to
$\inan$.  In addition, the path $p_3$ is removed from
$\mathit{province}.\mathit{frozen}[\mathit{France}]$ and inserted back
into the queue.  We act similarly on $p_7$. That is,
$\mathit{city}.\mathit{marks}[\mathit{Paris}]$ and
$\mathit{Paris}.\mathit{marks}[\mathit{Paris}]$ are changed to
$\inan$.  In this case, there are no paths to be unfrozen.

Now, the marks of $\mathit{city}$ for all the keywords (of the query)
are $\inan$. Hence, we generate answers from the paths that have
already reached $\mathit{city}$.  As a result, the answer $A_1$ of
Figure~\ref{fig:answers} is produced.  Moreover, from now on, when a
new path reaches $\mathit{city}$, we will try to generate more answers
by applying $\prodAns(\mathcal{P}, p)$.

\subsection{The Pseudocode of the GTF Algorithm}\label{sec:pseudocode}
\begin{figure}[!h]
\hrule
\begin{tabular}{l l}
	{\bf Algorithm:} & \textit{GTF (Generate Trees with Freezing)} \\
	{\bf Input:}     & $G=(V,E)$ is a data graph \\
			 		 & $K$ is a set of keyword nodes \\
	{\bf Output:}    & Answers to $K$ \\
\end{tabular}
\hrule
\begin{algorithmic}[1]
  \STATE $Q\gets$ an empty priority queue 		\label{alg:gtf:initQ_start}
  \FOR{$v\in V$}
     \STATE $v.\mathit{isKRoot}\gets \false$
  \ENDFOR
  \FOR{$v\in V$ and $k\in K$}
     \STATE $v.\mathit{paths[k]}\gets \emptyset$
     \STATE $v.\mathit{frozen[k]\gets\emptyset}$  \label{alg:gtf:initializeFrozen}
     \STATE $v.\mathit{marks[k]}\gets\act$        \label{alg:gtf:initializeMarks}
  \ENDFOR
  \FOR{$k\in K$} 
  	\STATE $Q.\pushq(\langle k \rangle)$
  \ENDFOR										\label{alg:gtf:initQ_end}
  \WHILE{$Q$ is not empty}						\label{alg:gtf:mainLoop_start}
  	\STATE $p\gets Q.\popq()$			\label{alg:gtf:popBestPath}
	\IF{\textbf{freeze}($p$)}					\label{alg:gtf:testFreeze}
  	  \STATE \continue				\label{alg:gtf:freezeContinue}
    \ENDIF
  	\STATE $v\gets\firstp(p)$
  	\IF{$v.\mathit{marks}[p.\pkwd]=\act$ }			\label{alg:gtf:testActiveStart}
  	  \STATE $v.\mathit{marks}[p.\pkwd]\gets\vis$ 	\label{alg:gtf:markVisited}
  	\ENDIF													\label{alg:gtf:testActiveEnd}
  	\STATE $\mathit{relax}\gets \textbf{true}$ \label{alg:gtf:relaxGetsTrue}
  	\IF{$v.\mathit{isKRoot} = \true$}					\label{alg:gtf:testOldRoot}
		\STATE $\textbf{unfreeze}(p,Q)$				\label{alg:gtf:unfrAtOldRoot}
  		\IF{$p$ has no cycles}							\label{alg:gtf:testLoopness}
		    \STATE $v.\mathit{paths}[p.\pkwd].\textbf{add}(p)$	\label{alg:gtf:addNodeVisit1}
  			\STATE $\prodAns(v.paths,p)$							\label{alg:gtf:prodAns1}
  		\ELSE
  			\STATE $\mathit{relax}\gets \false$ \label{alg:gtf:reassignFalse}
	    \ENDIF
	\ELSE
		\STATE $v.\mathit{paths}[p.\pkwd].\textbf{add}(p)$ \label{alg:gtf:addToPaths}
		\IF {for all $k\in K$, it holds that $v.\mathit{paths[k]}\not= \emptyset$} \label{alg:gtf:becomesRoot}
			\STATE $v.\mathit{isKRoot}\gets \true$ \label{alg:gtf:isRootGetsTrue}
			\FOR{$k \in K$}								\label{alg:gtf:newRootStart}
				\FOR{$p' \in v.\mathit{pahts}[k]$}		
			  	    \STATE $\textbf{unfreeze}(p',Q)$	\label{alg:gtf:unfrNewRoot}
			  	\ENDFOR
                           \STATE remove cyclic paths from $v.\mathit{paths}[k]$ \label{alg:gtf:removeCyclic}
                        \ENDFOR \label{alg:gtf:newRootEnd}
	  		   \STATE $\prodAns(v.paths,p)$ \label{alg:gtf:prodAns2}
  		\ENDIF	  	  	
  	\ENDIF
  	\IF{$\mathit{relax}$} \label{alg:gtf:relaxIsTrue}
	  	\FOR{$v' \in \parents(v)$}						\label{alg:gtf:relaxStart}
	        \IF{$v'$ is not on $p$ or $v'\rightarrow p$ is essential} \label{alg:gtf:testEssential}
	  			\STATE $Q.\pushq(v' \rightarrow p)$ \label{alg:gtf:insertQ}
	  		\ENDIF
	  	\ENDFOR	
  	\ENDIF										\label{alg:gtf:relaxEnd}
  \ENDWHILE										\label{alg:gtf:mainLoop_end}
\end{algorithmic}
\hrule
\caption{The GTF algorithm\label{alg:gtf}}
\end{figure}

\begin{figure}[t]
\begin{minipage}{0.53\textwidth}
\hrule
\begin{tabular}{l l}
	{\bf Procedure:} & $\textbf{freeze}(p)$ \\
\end{tabular}
\hrule
\begin{algorithmic}[1]
\IF{$\firstp(p).\mathit{marks}[p.\pkwd] = \vis$}
	\STATE $\firstp(p).\mathit{frozen}[p.\pkwd].\textbf{add}(p)$
	\STATE \textbf{return} \true
\ELSE	
	\STATE \textbf{return} \false
\ENDIF									\label{alg:gtf:freeze:lastLine}
\end{algorithmic}
\hrule
\begin{tabular}{l l}
	{\bf Procedure:} & $\textbf{unfreeze}(p,Q)$ \\
\end{tabular}
\hrule
\begin{algorithmic}[1]
\STATE $p'\gets p$
\WHILE{$p' \neq \perp$}
	\STATE ${\bar v}\gets \firstp(p')$
	\IF{${\bar v}.\mathit{marks}[p.\pkwd] \neq \inan$}
		\STATE ${\bar v}.\mathit{marks}[p.\pkwd]\gets\inan$
		\FOR{$p'' \in {\bar v}.\mathit{frozen}[p.\pkwd]$}
			\STATE $Q.\pushq(p'')$ \label{unfreeze:insert}
		\ENDFOR
	\ENDIF
	\STATE $p'\gets \predp(p')$
\ENDWHILE
\end{algorithmic}
\end{minipage}
\hfill
\begin{minipage}{0.45\textwidth}
\hrule
\begin{tabular}{l l}
	{\bf Procedure:} & $\prodAns(\mathcal{P}, p)$ \\
	{\bf Output:}    & answers rooted at $\firstp(p)$
\end{tabular}
\hrule
\begin{algorithmic}[1]
\STATE $\mathcal{P}[p.\pkwd]\gets\left\{{p}\right\}$  \label{alg:gt:prodAns:changeP}
\STATE $\mathit{iter}\gets$ new $\itgrs(\mathcal{P})$     \label{alg:ft:prodAns:newIter}
\WHILE {$\mathit{iter}.\textbf{hasNext}()$} \label{alg:gt:prodAns:mainLoop}
	\STATE $\bar{P}\gets \mathit{iter}.\nexti()$ 			\label{alg:gt:prodAns:next}	
	\STATE $a\gets$ combine all the paths in $\bar{P}$ \label{alg:gt:prodAns:createTree}
	\\ \emph{/*} $\nexti()$ \emph{ensures that the combination of all the paths in $\bar{P}$ yields a tree (rather than a graph) */}
	\IF {the root of $a$ has more than one child} 		\label{alg:gt:prodAns:testRoot}
		\STATE \textbf{output} $a$  \label{alg:gt:prodAns:output}
	\ENDIF
\ENDWHILE
\end{algorithmic}
\hrule
\end{minipage}
\caption{Helper procedures for the GTF algorithm}
\label{alg:gtf:helpers}
\end{figure}

The GTF algorithm is presented in Figure~\ref{alg:gtf}
and its helper procedures---in Figure~\ref{alg:gtf:helpers}. The input is
a data graph $G=(V,E)$ and a query $K=\left\{{k_1,\ldots,k_n}\right\}$.
The algorithm uses a single priority queue $Q$ to generate,
by increasing weight, all simple paths to every keyword node of $K$.
For each node $v\in V$, there is a flag $\mathit{isKRoot}$ that
indicates whether $v$ has a path to each keyword of $K$.
Initially, that flag is \textbf{false}.
For each node $v\in V$, the set of the constructed paths from $v$ to
the keyword $k$ is stored in $v.paths[k]$, which is initially empty.
Also, for all the keywords of $K$ and nodes of $G$, we initialize
the marks to be $\act$ and the lists of frozen paths to be empty.
The paths are constructed backwards, that is, from the last node (which is always a keyword).  
Therefore, for each $k\in K$, we insert the path
$\langle k \rangle$ (consisting of the single node $k$) into $Q$.
All these initializations are done in lines \ref{alg:gtf:initQ_start}--\ref{alg:gtf:initQ_end} (of Figure~\ref{alg:gtf}).

The main loop of
lines~\ref{alg:gtf:mainLoop_start}--\ref{alg:gtf:insertQ} is repeated
while $Q$ is not empty.  Line~\ref{alg:gtf:popBestPath} removes the
best (i.e.,~least-weight) path $p$ from $Q$.  Let $v$ and $k_i$ be the first
and last, respectively, nodes of $p$.  Line~\ref{alg:gtf:testFreeze}
freezes $p$ provided that it has to be done.  This is accomplished by
calling the procedure $\textbf{freeze}(p)$ of
Figure~\ref{alg:gtf:helpers} that operates as follows.  If the mark of
$v$ for $k_i$ is $\vis$, then $p$ is frozen at $v$ by adding it to
$v.frozen[k_i]$ and \textbf{true} is returned; in addition, the main
loop continues (in line~\ref{alg:gtf:freezeContinue}) to the next
iteration.  Otherwise, \textbf{false} is returned and $p$ is handled
as we describe next.

Line~\ref{alg:gtf:testActiveStart} checks if $p$ is the first path
from $v$ to $k_i$ that has been removed from $Q$.
If so, line~\ref{alg:gtf:testActiveEnd}
changes the mark of $v$ for $k_i$ from $\act$ to $\vis$.
Line~\ref{alg:gtf:relaxGetsTrue} assigns \textbf{true} to the flag
$\mathit{relax}$, which means that (as of now) $p$ should
spawn new paths that will be added to $Q$.

The test of line~\ref{alg:gtf:testOldRoot} splits the execution of
the algorithm into two cases.
If $v$ is a $K$-root (which must have
been discovered in a previous iteration and means that for every
$k\in K$, there is a path from $v$ to $k$), then the
following is done.  First, line~\ref{alg:gtf:unfrAtOldRoot} calls the
procedure $\textbf{unfreeze}(p,Q)$ of Figure~\ref{alg:gtf:helpers}
that unfreezes (i.e.,~inserts into $Q$) all the paths to $k_i$ that
are frozen at nodes of $p$ (i.e.,~the paths of ${\bar v}.frozen[k_i]$,
where $\bar v$ is a node of $p$).  In addition, for all nodes ${\bar v}$
of $p$, the procedure $\textbf{unfreeze}(p,Q)$ changes the
mark of ${\bar v}$ for $k_i$ to $\inan$.
Second, line~\ref{alg:gtf:testLoopness} tests whether $p$ is acyclic. If so,
line~\ref{alg:gtf:addNodeVisit1} adds $p$ to the paths of $v$ that reach $k_i$,
and line~\ref{alg:gtf:prodAns1} produces new answers that include $p$
by calling $\textbf{produceAnswers}$ of Figure~\ref{alg:gtf:helpers}.
The pseudocode of $\prodAns(v.paths,p)$ is just an efficient implementation
of considering every combination of paths $p_1,\ldots,p_n$, 
such that $p_i$ is from $v$ to $k_i$ ($1\le i\le n$),
and checking that it is an answer to $K$.
(It should be noted that GTF generates answers by increasing height.)
If the test of line~\ref{alg:gtf:testLoopness} is \textbf{false},
then the flag $\mathit{relax}$ is changed back to \textbf{false},
thereby ending the current iteration of the main loop.

If the test of line~\ref{alg:gtf:testOldRoot} is \textbf{false}
(i.e.,~$v$ has not yet been discovered to be a $K$-root), the execution
continues in line~\ref{alg:gtf:addToPaths} that adds $p$ to the paths
of $v$ that reach $k_i$.  Line~\ref{alg:gtf:becomesRoot} tests whether
$v$ is now a $K$-root and if so, the flag $\mathit{isKRoot}$ is set to
\textbf{true} and the following is done.  The nested loops of
lines~\ref{alg:gtf:newRootStart}--\ref{alg:gtf:newRootEnd} iterate
over all paths $p'$ (that have already been discovered) from $v$ to
any keyword node of $K$ (i.e.,~not just $k_i$).  For each $p'$, where
${k'}$ is the last node of $p'$ (and, hence, is a keyword),
line~\ref{alg:gtf:unfrNewRoot} calls $\textbf{unfreeze}(p',Q)$,
thereby inserting into $Q$ all the paths to ${k'}$ that are frozen
at nodes of $p'$ and changing the mark (for ${k'}$) of those nodes
to $\inan$.
Line~\ref{alg:gtf:removeCyclic} removes all the cyclic paths among
those stored at $v$.  Line~\ref{alg:gtf:prodAns2} generates answers
from the paths that remain at $v$.

If the test of line~\ref{alg:gtf:relaxIsTrue} is \textbf{true},
the relaxation of $p$ is done in
lines~\ref{alg:gtf:relaxStart}--\ref{alg:gtf:relaxEnd} as follows.
For each parent $v'$ of $v$, the
path $v' \rightarrow p$ is inserted into $Q$ if either one of the
following two holds (as tested in
line~\ref{alg:gtf:testEssential}). First, $v'$ is not on $p$.
Second, $v' \rightarrow p$
is essential, according to the following definition.  The path $v'
\rightarrow p$ is \emph{essential} if $v'$ appears on $p$ and the
section of $v' \rightarrow p$ from its first node (which is $v'$) to
the next occurrence of $v'$ has at least one node $u$, such that
$u.\mathit{marks[k]}=\vis$, where the keyword $k$ is the last node of $p$.
Appendix~\ref{sec:looping_paths} 
gives an example
that shows why essential paths (which are cyclic)
have to be inserted into $Q$.

Note that due to line~\ref{alg:gtf:reassignFalse}, no cyclic path $p[v,k]$
is relaxed if $v$ has already been discovered to
be a $K$-root in a previous iteration. The reason is that none of the nodes
along $p[v,k]$ could have the mark $\vis$ for the keyword $k$ (hence, 
no paths are frozen at those nodes).

Observe that before $v$ is known to be a $K$-root, we add cyclic paths to
the array $v.paths$. Only when discovering that $v$ is a $K$-root, do we
remove all cyclic paths from $v.paths$ (in
line~\ref{alg:gtf:removeCyclic}) and stop adding them in subsequent
iterations.  This is lazy evaluation, because prior to
knowing that answers with the $K$-root $v$ should be produced, it is a
waste of time to test whether paths from $v$ are cyclic.
\section{Correctness and Complexity of GTF} \label{correctness}
\subsection{Definitions and Observations}
Before proving correctness of the GTF algorithm,
we define some notation and terminology (in addition 
to those of Section~\ref{sec:prelim}) and state a few observations.
Recall that the data graph is $G=(V,E)$.
Usually, a keyword is denoted by $k$, whereas $r$, $u$, $v$ and $z$
are any nodes of $V$.

We only consider directed paths of $G$ that are defined as usual.
If $p$ is a path from $v$ to $k$, then we write it as $p[v,k]$ when we want
to explicitly state its first and last nodes.
We say that node $u$ is \emph{reachable} from $v$ if there is a path
from $v$ to $u$.

A \emph{suffix} of $p[v,k]$ is a traversal of $p[v,k]$ that starts at 
(some particular occurrence of) a node $u$ and ends at the last node of $p$.
Hence, a suffix of $p[v,k]$ is denoted by $p[u,k]$.
A \emph{prefix} of $p[v,k]$ is a traversal of $p[v,k]$ that starts at $v$
and ends at (some particular occurrence of) a node $u$. Hence, a prefix of
$p[v,k]$ is denoted by $p[v,u]$.
A suffix or prefix of $p[v,k]$ is \emph{proper} if it is different from
$p[v,k]$ itself.

Consider two paths $p_1[v,z]$ and $p_2[z,u]$; that is, the former ends
in the node where the latter starts. Their \emph{concatenation},
denoted by $p_1[v,z] \concat p_2[z,u]$, is obtained by joining them at node $z$.

As already mentioned in Section~\ref{sec:prelim},
a positive \emph{weight} function $w$ is defined on the
nodes and edges of $\G$. The weight of a path $p[v,u]$, denoted by
$\fweight(p[v,u])$, is the sum of weights over all the nodes and
edges of $p[v,u]$. A \emph{minimal} path from $v$ to $u$ has the minimum weight
among all paths from $v$ to $u$.  Since the weight function is
positive, there are no zero-weight cycles.  Therefore, a minimal path
is acyclic.  Also observe that the weight of a proper suffix or prefix
is strictly smaller than that of the whole path.\footnote{For the proof of correctness, it is
  enough for the weight function to be non-negative
  (rather than positive) provided that every cycle has a positive weight.
}

Let $K$ be a query (i.e.,~a set of at least two keywords).  Recall
from Section~\ref{sec:prelim} the definitions of $K$-root, $K$-subtree
and height of a subtree.  The \emph{best height} of a $K$-root $r$ is
the maximum weight among all the minimal paths from $r$ to any keyword
$k\in K$.  Note that the height of any $K$-subtree rooted at $r$ is at
least the best height of $r$.

Consider a nonempty set of nodes $S$ and a node $v$.
If $v$ is reachable from every node of $S$, then we say that node $u\in S$
is \emph{closest to} $v$ if a minimal path from $u$ to $v$ has the 
minimum weight among all paths from any node of $S$ to $v$.

Similarly, if every node of $S$ is reachable from $v$, then we say
that node $u\in S$ is \emph{closest from} $v$ if a minimal path from
$v$ to $u$ has the minimum weight among all paths from $v$ to any node
of $S$.

In the sequel, line numbers refer to the algorithm GTF of Figure~\ref{alg:gtf},
unless explicitly stated otherwise.
We say that a node $v \in \V$ is \emph{discovered as a $K$-root}
if the test of line~\ref{alg:gtf:becomesRoot} is satisfied
and $v.\mathit{isRoot}$ is assigned $\true$ in 
line~\ref{alg:gtf:isRootGetsTrue}.
Observe that the test of line~\ref{alg:gtf:becomesRoot} is $\true$
if and only if for all $k\in K$, it holds that 
$v.\mathit{marks[k]}$ is either $\vis$ or $\inan$.
Also note that line~\ref{alg:gtf:isRootGetsTrue} is executed at most once
for each node $v$ of $\G$. Thus, there is at most one iteration of the main
loop (i.e.,~line~\ref{alg:gtf:mainLoop_start}) that discovers $v$ as $K$-root.

We say that a path $p$ is \emph{constructed} when it is inserted into $Q$
for the first time, which must happen in line~\ref{alg:gtf:insertQ}.
A path is \emph{exposed} when it is removed from $Q$ in 
line~\ref{alg:gtf:popBestPath}. Observe that a path $p[v,k]$ may be exposed
more than once, due to freezing and unfreezing.

\theoremstyle{definition}
\newtheorem{proposition}[theorem]{Proposition}
\begin{proposition}\label{prop:twice}
A path can be exposed at most twice.
\end{proposition}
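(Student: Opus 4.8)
The plan is to track the life cycle of a fixed path $p=p[v,k]$ and to argue that the only mechanism that can cause $p$ to be re-inserted into $Q$ after it has been exposed is freezing followed by unfreezing, and that this can happen at most once. First I would observe that $p$ is constructed exactly once (line~\ref{alg:gtf:insertQ}), so the only way it can be exposed a second (or third) time is that after being exposed it was frozen --- added to $\firstp(p).\mathit{frozen}[k]$ by \textbf{freeze} --- and later unfrozen back into $Q$ (line~\ref{unfreeze:insert} of \textbf{unfreeze}, which is the only line other than line~\ref{alg:gtf:insertQ} that pushes onto $Q$). So the whole proposition reduces to: $p$ can be frozen at $v=\firstp(p)$ at most once.

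\textbf{Key steps.}
The argument hinges on how the mark $v.\mathit{marks}[k]$ evolves. Initially it is $\act$ (line~\ref{alg:gtf:initializeMarks}). The mark only ever changes in the ``forward'' direction $\act \to \vis \to \inan$: line~\ref{alg:gtf:markVisited} changes $\act$ to $\vis$, and \textbf{unfreeze} changes a non-$\inan$ mark to $\inan$; nothing ever resets a mark. The \textbf{freeze} procedure adds $p$ to $v.\mathit{frozen}[k]$ (and returns $\true$) precisely when $v.\mathit{marks}[k]=\vis$ at the moment $p$ is exposed. Now I would argue: the first time $p$ is exposed, either $v.\mathit{marks}[k]$ is $\act$ or $\vis$ or $\inan$. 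If it is $\act$ at first exposure, line~\ref{alg:gtf:markVisited} sets it to $\vis$ and $p$ is not frozen this time; if it is $\vis$, $p$ is frozen; if it is already $\inan$, $p$ is never frozen at all (and, as noted in the remark after the pseudocode, $p$ could only be cyclic in a degenerate way --- in any case \textbf{freeze} returns $\false$ forever after). The crucial point is that once $p$ is unfrozen, this happens inside \textbf{unfreeze}$(p',Q)$ for some path $p'$ sharing the node $v$ and the keyword $k$ with a section of which $p$ is frozen; but \textbf{unfreeze} sets $v.\mathit{marks}[k]$ to $\inan$ (if it was not already), and from then on $v.\mathit{marks}[k]=\inan$ permanently. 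Hence on every subsequent exposure of $p$, the test in \textbf{freeze} fails, so $p$ is never re-frozen, so it is never re-inserted into $Q$ again. Therefore $p$ is exposed at most: once before freezing, and once after unfreezing --- i.e.\ at most twice. (In the case where $p$ is never frozen, it is exposed at most once.)

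\textbf{Main obstacle.}
The delicate part is making precise the invariant ``once $v.\mathit{marks}[k]$ becomes $\inan$ it stays $\inan$, and once it is $\inan$ no path ending at $k$ that starts at $v$ can be frozen again,'' and in particular verifying that unfreezing a path $p$ genuinely forces $v.\mathit{marks}[k]=\inan$ where $v=\firstp(p)$. One has to check that when \textbf{unfreeze}$(p',Q)$ pushes $p$ back (because $p\in \bar v.\mathit{frozen}[k]$ for some node $\bar v$ on $p'$), the node $\bar v$ is exactly $\firstp(p)$ --- this is guaranteed because \textbf{freeze} only ever files $p$ under $\firstp(p).\mathit{frozen}[p.\pkwd]$ --- and that \textbf{unfreeze} visits $\bar v$ and sets $\bar v.\mathit{marks}[k]$ to $\inan$ before (or as part of) the same traversal that pushes $p$. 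Once that synchronization is nailed down, the monotonicity of marks does the rest; I do not anticipate any genuinely hard combinatorial content, only careful bookkeeping of the order of the three mark values and of which list a frozen path lives in.
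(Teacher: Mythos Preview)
Your proposal is correct and follows essentially the same approach as the paper: both argue that re-insertion can occur only via freeze/unfreeze, that \textbf{unfreeze} sets $v.\mathit{marks}[k]$ to $\inan$ when it pushes $p$ back, and that the monotonicity of marks then prevents a second freezing. Your write-up is more explicit about the mark-evolution invariant and the bookkeeping of which frozen list $p$ lives in, but the logical skeleton is identical to the paper's proof.
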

\begin{proof}
When an iteration exposes a path $p[v,k]$ for the first time,
it does exactly one of the following.
It freezes $p[v,k]$ at node $v$,
discard $p[v,k]$ due to line~\ref{alg:gtf:reassignFalse}, or
extend (i.e.,~relax) $p[v,k]$ in the loop of line~\ref{alg:gtf:relaxStart}
and inserts the results into $Q$ in line~\ref{alg:gtf:insertQ}.
Note that some relaxations of $p[v,k]$ are never inserted into $Q$,
due to the test of line~\ref{alg:gtf:testEssential}.
Only if $p[v,k]$ is frozen at $v$, can it be inserted a second time
into $Q$, in line~\ref{unfreeze:insert} of the procedure $\textbf{unfreeze}$
(Figure~\ref{alg:gtf:helpers}) that also sets $v.\mathit{marks[k]}$ to $\inan$.
But then $p[v,k]$ cannot freeze again at $v$,
because $v.\mathit{marks[k]}$ does not change after becoming $\inan$.
Therefore, $p[v,k]$ cannot be inserted into $Q$ a third time.
\end{proof}

In the next section, we sometimes refer to the mark of
a node $v$ of a path $p$.  It should be clear from the context that we
mean the mark of $v$ for the keyword where $p$ ends.

\subsection{The Proof}
We start with an auxiliary lemma that considers the concatenation of
two paths, where the linking node is $z$, as shown in
Figure~\ref{fig:illustration_a} (note that a wavy arrow denotes a
path, rather than a single edge). Such a concatenation is used in the
proofs of subsequent lemmas. 

\tikzstyle{vertex}=[ellipse,draw,minimum size=14pt,inner sep=1pt]   
\tikzstyle{keyword}=[rectangle,draw,minimum size=14pt,inner sep=1pt]  

\tikzstyle{path}=[decorate, decoration = {snake, post length=3pt, segment length=4pt,amplitude=0.6pt},->,>=stealth']
\tikzstyle{path-}=[decorate, decoration = {snake, post length=0pt, segment length=4pt,amplitude=0.6pt},-,>=stealth']
\tikzset{every edge/.append style={->,>=stealth',shorten >=0pt}}

\begin{figure}
\centering
\begin{tikzpicture}
  
  \node[vertex] (v) at (0,0) {$v$};
  \node[vertex] (u) at (2,0.33){$u$};
  \node[vertex] (z) at (3,0.5){$z$};
  \node[keyword] (k) at (6,0){$k$};
  \node (text1) at (1.5,-0.5){$p_s[v,z]$};
  \node (text2) at (4.5,-0.5){$p_m[z,k]$};
   
  \draw  (v)  edge [path] (u); 
  \draw  (u)  edge [path] (z); 
  \draw  (z)  edge [path] (k);
  
\end{tikzpicture}
\caption{The path $\bar{p}[v,k]$}
\label{fig:illustration_a}
\end{figure}
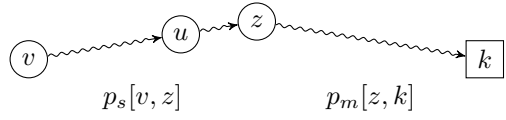

\begin{restatable}{lemma}{lemmaPC}
\label{LEMMA:PATH-CONCAT}
  Let $k$ be a keyword of the query $K$, and let $v$ and $z$ be nodes of the
  data graph.  Consider two paths $p_s[v,z]$ and $p_m[z,k]$.
  Let $\bar p[v,k]$ be their concatenation at node $z$, that is,
\begin{equation*}
\bar p[v,k] =  p_s[v,z] \concat p_m[z,k].  
\end{equation*}
Suppose that the following hold at the beginning of iteration $i$ of the 
main loop (line~\ref{alg:gtf:mainLoop_start}).
\begin{enumerate}
\item\label{cond:new-first}
The path $p_s[v,z]$ is minimal or (at least) acyclic.
\item\label{cond:new-second}
The path $p_m[z,k]$ has changed $z.\mathit{marks[k]}$ from $\act$ to $\vis$
in an earlier iteration.
\item\label{cond:first}
$z.\mathit{marks[k]}=\vis$.
\item\label{cond:second}
For all nodes $u\not=z$ on the path $p_s[v,z]$, 
the suffix $\bar p[u,k]$ is not frozen at $u$.
\item\label{cond:third}
The path $\bar p[v,k]$ has not yet been exposed.
\end{enumerate}
Then, some suffix of $\bar p[v,k]$ must be on $Q$ at the beginning of
iteration $i$.
\end{restatable}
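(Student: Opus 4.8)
The plan is to induct on the iteration index $i$, or equivalently to argue by contradiction: take the first iteration $i$ at which the hypotheses hold but no suffix of $\bar p[v,k]$ is on $Q$. Since $p_m[z,k]$ already changed $z.\mathit{marks[k]}$ to $\vis$ in an earlier iteration (hypothesis~\ref{cond:new-second}) and $z.\mathit{marks[k]}$ stays $\vis$ through iteration $i$ (hypothesis~\ref{cond:first}), the suffix $\bar p[z,k]=p_m[z,k]$ is the minimal path from $z$ to $k$ that was exposed; I will want to trace what happened to the relaxations coming off it. The key structural fact is that $\bar p[v,k]$ is built from $p_s[v,z]$ (acyclic, by hypothesis~\ref{cond:new-first}) glued onto $p_m[z,k]$, so its suffixes are exactly the suffixes $\bar p[u,k]$ for $u$ on $p_s$, followed by the suffixes of $p_m$. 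The point of hypothesis~\ref{cond:second} is that none of the $\bar p[u,k]$ with $u\neq z$ is currently frozen at $u$; combined with hypothesis~\ref{cond:third} ($\bar p[v,k]$ itself not yet exposed), this should force one of these suffixes to be sitting in $Q$.

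First I would set up the following more refined claim and prove it by induction on the distance of $u$ from $v$ along $p_s$ (or on a well-chosen iteration count): for the largest prefix position, consider the node $u$ on $p_s[v,z]$ that is \emph{closest to} $v$ such that $\bar p[u,k]$ has been exposed or is on $Q$; such a $u$ exists because $\bar p[z,k]=p_m[z,k]$ was exposed (it did the mark change in hypothesis~\ref{cond:new-second}). If $u=v$ we are almost done, since an exposed-or-queued suffix of $\bar p[v,k]$ is $\bar p[v,k]$ itself or a strict suffix; hypothesis~\ref{cond:third} rules out the former case being "already fully processed", so I must show $\bar p[v,k]$ is still on $Q$ — here I use that the only ways a path leaves $Q$ are: being frozen (excluded for $u\neq z$ by hypothesis~\ref{cond:second}, and for the final suffix we track the freeze status of $v$), being discarded at line~\ref{alg:gtf:reassignFalse} (only happens to cyclic paths at an already-discovered $K$-root, and $p_s$ acyclic plus the structure of $\bar p$ should rule this out, or at least push the obstruction up), or being relaxed. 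If $u\neq v$, let $u'$ be the predecessor of $u$ on $p_s[v,z]$ (so $u'\to\bar p[u,k]=\bar p[u',k]$). When $\bar p[u,k]$ was exposed, either $u.\mathit{marks[k]}$ was $\act$/$\vis$/$\inan$; in the $\vis$ case it would have been frozen, contradicting the choice of $u$ as closest exposed-or-queued suffix unless it was subsequently unfrozen — and an unfrozen path is on $Q$, so either $\bar p[u,k]$ is on $Q$ (done, it's a suffix of $\bar p[v,k]$) or it was relaxed, producing $\bar p[u',k]$ in $Q$ via line~\ref{alg:gtf:insertQ}, contradicting minimality of $u$. The case $u.\mathit{marks[k]}\in\{\act,\inan\}$ at exposure time similarly leads to relaxation (the test at line~\ref{alg:gtf:testEssential} is passed since $u'$ is not on the acyclic remainder, or $\bar p[u',k]$ is essential) and hence $\bar p[u',k]\in Q$, again contradicting minimality.

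The main obstacle I anticipate is the interaction between freezing/unfreezing and the essential-path test at line~\ref{alg:gtf:testEssential}: I need to be careful that when a suffix $\bar p[u,k]$ is relaxed to $\bar p[u',k]$, the relaxation actually goes into $Q$ rather than being silently dropped because $u'$ lies on the path and $u'\to\bar p[u,k]$ is not "essential". Since $p_s[v,z]$ is acyclic, the only repeated nodes in $\bar p[v,k]$ come from $u'$ coinciding with a node already on $p_m[z,k]$; in that situation I will need to show that the cyclic section contains a node marked $\vis$ for $k$ (making the path essential) — and indeed $z$ itself is on that section and $z.\mathit{marks[k]}=\vis$ by hypothesis~\ref{cond:first}, so essentiality should hold whenever the cycle passes through $z$; the remaining sub-case is a cycle strictly inside $p_m$'s node set not containing $z$, which I expect to handle by noting $p_m[z,k]$'s own nodes were all marked $\vis$ (or $\inan$) when $p_m$ was processed. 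A second, more bookkeeping-heavy obstacle is making the induction's "first bad iteration $i$" argument airtight: I must verify that at iteration $i-1$ (or whenever the relevant suffix was last on $Q$) the hypotheses were in a form I can invoke, which may require restating the invariant as "either some suffix is on $Q$, or it is frozen at a node $\neq z$, or it has been exposed and relaxed" and then peeling off the excluded alternatives using hypotheses~\ref{cond:second} and~\ref{cond:third}. I would close by collecting these cases into a single contradiction with the choice of $i$.
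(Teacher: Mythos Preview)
Your approach is essentially the paper's argument, reorganised: where the paper does a direct two-case contradiction (``no suffix is on $Q$, so some suffix must be either frozen or discarded; rule out both''), you track forward the closest-to-$v$ suffix that has been exposed or queued and push it one step further. Both rest on the same key observation---any cycle created when relaxing $\bar p[u,k]$ to $\bar p[u',k]$ with $u'$ on $p_s$ must pass through $z$, and $z.\mathit{marks}[k]=\vis$ throughout, so the essential-path test at line~\ref{alg:gtf:testEssential} passes---and both exclude freezing via Condition~\ref{cond:second}. Your worry about a ``cycle strictly inside $p_m$'s node set not containing $z$'' is unfounded: since $p_s$ is acyclic and $u'$ precedes $u$ on $p_s$, the next occurrence of $u'$ in $\bar p[u,k]$ lies on $p_m$, and the segment between them traverses $p_s[u,z]$ first, hence contains $z$.

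There is one genuine gap. Your treatment of line~\ref{alg:gtf:reassignFalse} (``$p_s$ acyclic plus the structure of $\bar p$ should rule this out, or at least push the obstruction up'') does not work: $\bar p[u,k]$ can perfectly well be cyclic (when $u$ or a later node of $p_s$ also lies on $p_m$), and if $u$ is already a discovered $K$-root, the path would be discarded there. The correct argument, which the paper uses, is that reaching line~\ref{alg:gtf:reassignFalse} is preceded by the call $\textbf{unfreeze}(\bar p[u,k],Q)$ at line~\ref{alg:gtf:unfrAtOldRoot}, which sets $x.\mathit{marks}[k]\gets\inan$ for every node $x$ on $\bar p[u,k]$, in particular for $z$; this contradicts Condition~\ref{cond:first} at iteration $i$. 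With this fix your argument goes through.
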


\begin{proof}
  Suppose, by way of contradiction, that no suffix of $\bar p[v,k]$ is
  on $Q$ at the beginning of iteration $i$. Since $\bar p[v,k]$
  has not yet been exposed, there are two possible cases regarding its state.
  We derive a contradiction by showing that none of them can happen.
\begin{description}
\item[Case 1:] Some suffix of $\bar p[v,k]$ is frozen.  This cannot
  happen at any node of $\bar p[z,k]$ (which is the same as
  $p_m[z,k]$), because Condition~\ref{cond:first} implies that
  $p_m[z,k]$ has already changed $z.\mathit{marks[k]}$ to $\vis$.
  Condition~\ref{cond:second} implies that it cannot happen at the
  other nodes of $\bar p[v,k]$ (i.e.,~the nodes $u$ of $p_s[v,z]$ that
  are different from $z$).
\item[Case 2:] Some suffix of $\bar p[v,k]$ has already been discarded 
(in an earlier iteration) either by the test of line~\ref{alg:gtf:testEssential}
or due to line~\ref{alg:gtf:reassignFalse}.
  This cannot happen to any
  suffix of $\bar p[z,k]$ (which is the same as $p_m[z,k]$), because
  $p_m[z,k]$ has already changed $z.\mathit{marks[k]}$ to $\vis$. We
  now show that it cannot happen to any other suffix $\bar p[u,k]$,
  where $u$ is a node of $p_s[v,z]$ other than $z$.  Note that $\bar
  p[v,k]$ (and hence $\bar p[u,k]$) is not necessarily
  acyclic. However, the lemma states that $p_s[v,z]$ is
  acyclic. Therefore, if the suffix $\bar p[u,k]$, has a cycle that
  includes $u$, then it must also include $z$.  But
  $z.\mathit{marks[k]}$ is $\vis$ from the moment it was changed to
  that value until the beginning of iteration $i$ (because a mark
  cannot be changed to $\vis$ more than once). Hence, the suffix $\bar
  p[u,k]$ could not have been discarded by the test of
  line~\ref{alg:gtf:testEssential}.
It is also not possible that
line~\ref{alg:gtf:reassignFalse} 
has already discarded $\bar p[u,k]$ for the following reason.
If line~\ref{alg:gtf:reassignFalse} is reached
(in an iteration that removed $\bar p[u,k]$ from $Q$), then 
for all nodes $x$ on $\bar p[u,k]$,  line~\ref{alg:gtf:unfrAtOldRoot}
has already changed $x.\mathit{marks[k]}$ to $\inan$.
Therefore, $z.\mathit{marks[k]}$ cannot be $\vis$ at the beginning
of iteration $i$. 

\end{description}
It thus follows that some suffix of $\bar p[v,k]$ is on $Q$
at the beginning of iteration $i$.
\end{proof}

\begin{restatable}{lemma}{lemmaSPM}
\label{LEMMA:GTF-SHORTEST-PATH-MARKS} 
For all nodes $v\in \V$ and keywords $k\in K$, 
the mark $v.\mathit{marks[k]}$ can be changed from $\act$ to $\vis$
only by a minimal path from $v$ to $k$.
\end{restatable}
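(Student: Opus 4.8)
The plan is to argue by contradiction, singling out the \emph{earliest} iteration at which the claim is violated. Suppose some execution changes $v.\mathit{marks}[k]$ from $\act$ to $\vis$ in line~\ref{alg:gtf:markVisited} via a path $p=p[v,k]$ that is not minimal, and let iteration $i$ be the first iteration in which this happens; fix a minimal path $p^\ast[v,k]$, so $w(p^\ast)<w(p)$. First I would record a few easy facts. The path $\langle k\rangle$ is the unique least‑weight path ending at $k$, so among all paths ending at $k$ it is exposed first and it sets $k.\mathit{marks}[k]$ to $\vis$ before $p$ is exposed; hence $v\neq k$ (otherwise $p=\langle k\rangle$, which is minimal), and at the start of iteration $i$ we have $v.\mathit{marks}[k]=\act$ while $k.\mathit{marks}[k]\neq\act$. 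Walking along $p^\ast$ from $v$ toward $k$, let $z$ be the first node with $z.\mathit{marks}[k]\neq\act$ at the start of iteration $i$ (it exists, and $z\neq v$), and let $z'$ be its predecessor on $p^\ast$; since $z'$ lies strictly before $z$, we have $z'.\mathit{marks}[k]=\act$ at the start of iteration~$i$ and — marks never moving backwards — at every earlier iteration as well. Because $z.\mathit{marks}[k]\neq\act$, it was advanced to $\vis$ in line~\ref{alg:gtf:markVisited} at some iteration $i''<i$ by a path $q=q[z,k]$, and by the minimality of $i$ this $q$ is minimal from $z$ to $k$; since $p^\ast[z,k]$ is a suffix of the minimal path $p^\ast$, it too is minimal, so $w(q)=w(p^\ast[z,k])$. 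Put $\bar p:=p^\ast[v,z]\concat q$; then $w(\bar p)=w(p^\ast)<w(p)$, so $\bar p$ is itself a minimal path from $v$ to $k$, hence acyclic, and so is every suffix of it.

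The contradiction now comes from producing a path on $Q$ that is lighter than $p$ at the start of iteration $i$. The natural tool is Lemma~\ref{LEMMA:PATH-CONCAT}, applied with $p_s=p^\ast[v,z]$ and $p_m=q$: its hypothesis~\ref{cond:new-first} holds (a prefix of a minimal path is acyclic), hypothesis~\ref{cond:new-second} holds ($q$ advanced $z.\mathit{marks}[k]$ in iteration $i''<i$), hypothesis~\ref{cond:second} holds because every node of $p^\ast$ strictly before $z$ still carries $\act$ and so has nothing frozen at it, and hypothesis~\ref{cond:third} holds because $\bar p[v,k]$ cannot have been exposed (an exposure of it would have advanced $v.\mathit{marks}[k]$ past $\act$). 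The lemma then yields a suffix of $\bar p$ on $Q$ at the start of iteration $i$, which has weight at most $w(\bar p)=w(p^\ast)<w(p)$ — impossible, since $p$ is the least‑weight element of $Q$ then. Equivalently, one can bypass the lemma and make this concrete: when $q$ was exposed at iteration $i''$ it was not frozen and $z$ was not yet a $K$-root (its list of paths to $k$ was still empty, else its mark would have been advanced), so the iteration reached line~\ref{alg:gtf:relaxIsTrue} with $\mathit{relax}=\true$ and relaxed $q$; as $z'$ is a parent of $z$ not occurring on $q$ (being the head of the acyclic suffix $z'\!\rightarrow\! q$ of $\bar p$), line~\ref{alg:gtf:testEssential} is passed and $\bar p[z',k]=z'\!\rightarrow\! q$ is pushed onto $Q$ in line~\ref{alg:gtf:insertQ}. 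Since $w(\bar p[z',k])\le w(\bar p)<w(p)$ and a path leaves $Q$ only by being exposed, $\bar p[z',k]$ must be exposed at some iteration $j$ with $i''<j<i$; at that moment $z'.\mathit{marks}[k]=\act$, so $\textbf{freeze}$ returns $\false$ and line~\ref{alg:gtf:markVisited} changes $z'.\mathit{marks}[k]$ to $\vis$, contradicting $z'.\mathit{marks}[k]=\act$ at the later iteration $i$.

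I expect the delicate part to be precisely the guarantee that a light suffix of $\bar p$ really makes it onto $Q$. Two things must be controlled: that the suffix chain $q,\ z'\!\rightarrow\! q,\ \ldots$ is not thrown away by the ``essential path'' test of line~\ref{alg:gtf:testEssential} or by line~\ref{alg:gtf:reassignFalse}, and that none of the nodes of $p^\ast$ strictly before $z$ has already had its mark advanced. Both hinge on the same two observations used above — that $\bar p$, having minimal weight, is acyclic, and that $z$ was chosen as the \emph{first} node of $p^\ast$ whose mark has left $\act$ — and this is exactly the bookkeeping that Lemma~\ref{LEMMA:PATH-CONCAT} packages. The one point where extra care is needed is hypothesis~\ref{cond:first} of that lemma, which asks for $z.\mathit{marks}[k]=\vis$ rather than merely $\neq\act$: if the frontier node $z$ has already been pushed on to $\inan$, one cannot invoke Lemma~\ref{LEMMA:PATH-CONCAT} verbatim and must instead fall back on the direct suffix‑tracking argument of the previous paragraph (which never distinguishes $\vis$ from $\inan$), or reduce the $\inan$ case to the $\vis$ case. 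Handling that distinction cleanly is, I think, the main obstacle.
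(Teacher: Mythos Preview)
Your proof is correct, and your one-step suffix-tracking argument in the second paragraph genuinely closes the gap you flagged (since $z'.\mathit{marks}[k]=\act$ at every iteration up to $i$, the path $z'\rightarrow q$ cannot freeze when exposed, so its exposure before $i$ advances the mark at $z'$, which is the contradiction).

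The paper follows the same architecture --- pick an extremal violation, locate a frontier node $z$ on a minimal path, form $\bar p=p_s[v,z]\concat p_m[z,k]$, and invoke Lemma~\ref{LEMMA:PATH-CONCAT} --- but makes two different choices that together sidestep the obstacle you identified. First, the extremal principle: instead of the earliest violating iteration, the paper takes $v$ to be a \emph{closest} violating node to $k$, so that any node on a minimal path from $v$ to $k$ other than $v$ itself already satisfies the lemma (this is how minimality of $p_m[z,k]$ is obtained). Second, and this is the key point, the paper chooses $z$ not as the first node along $p_s[v,k]$ whose mark has left $\act$, but as the first node at which a suffix of some minimal path from $v$ to $k$ is \emph{frozen}; existence of such a $z$ is argued by elimination (the minimal path has not yet been exposed, cannot be discarded because it is acyclic, and cannot be on $Q$ because a strictly heavier path is being removed). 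Since freezing only occurs at nodes whose mark is $\vis$, this choice of $z$ automatically delivers hypothesis~\ref{cond:first} of Lemma~\ref{LEMMA:PATH-CONCAT}, so no separate treatment of the $\inan$ case is needed. In short: your frontier definition makes the existence of $z$ trivial but forces a fallback argument; the paper's frontier definition requires a short existence argument but then allows a clean single invocation of the key lemma.
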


\begin{proof}
Suppose that the lemma is not true for some keyword $k\in K$.
Let $v$ be a closest node to $k$
among all those violating
the lemma with respect to $k$.
Node $v$ is different from $k$, because the path
$\anset{k}$ marks $k$ as $\vis$. We will derive a contradiction by
showing that a minimal path changes $v.\mathit{marks[k]}$ 
from $\act$ to $\vis$.
 
Let $p_s[v,k]$ be a minimal path  from $v$ to $k$. 
Consider the iteration $i$ of the main loop
(line~\ref{alg:gtf:mainLoop_start} in Figure~\ref{alg:gtf}) that changes 
$v.\mathit{marks[k]}$ to $\vis$ (in line~\ref{alg:gtf:markVisited}).
Among all the nodes of $p_s[v,k]$
in which suffixes of some minimal paths from $v$ to $k$ are
frozen at the beginning of iteration $i$, let $z$ be the first one
when traversing $p_s[v,k]$ from $v$ to $k$
(i.e.,~on the path $p_s[v,z]$, node $z$ is the only one in which
such a suffix is frozen).  Node $z$ exists for the following three reasons.
\begin{itemize}
\item The path $p_s[v,k]$ has not been exposed prior to
  iteration $i$, because we assume that $v.\mathit{marks[k]}$ is
  changed to $\vis$ in iteration $i$ and that change can happen only once.
\item The path $p_s[v,k]$ is acyclic (because it is minimal), so
  a suffix of $p_s[v,k]$ could not have been discarded either
  by the test of line~\ref{alg:gtf:testEssential} or due to 
line~\ref{alg:gtf:reassignFalse}.
\item The path $p_s[v,k]$ (or any suffix thereof) cannot be on the
  queue at the beginning of iteration $i$, because $v$ violates the
  lemma, which means that a non-minimal path from $v$ to $k$ must be
  removed from the queue at the beginning of that iteration.
\end{itemize}
The above three observations imply that a proper suffix of $p_s[v,k]$
must be frozen at the beginning of iteration $i$ and, hence, node $z$ exists.
Observe that $z$ is different from $v$, because a path to $k$
can be frozen only at a node $\hat v$, such that 
${\hat v}.\mathit{marks[k]}=\vis$, whereas we assume that
$v.\mathit{marks[k]}$ is $\act$ at the beginning of iteration $i$.

By the selection of $v$ and $p_s[v,k]$ (and the above fact that $z\not=v$),
node $z$ does not violate the lemma, because $p_s[z,k]$ is a proper
suffix of $p_s[v,k]$ and, hence, $z$ is closer to $k$ than $v$.
Therefore, according to the lemma, there is a minimal path $p_m[z,k]$
that changes $z.\mathit{marks[k]}$ to $\vis$.  Consequently,
  \begin{equation}\label{eqn:min-path}
  \fweight(p_m[z,k]) \leq \fweight(p_s[z,k]).
  \end{equation}
Now, consider the path 
  \begin{equation}\label{eqn:new-path}
  \bar{p}[v,k] = p_s[v,z] \concat p_m[z,k].
  \end{equation}
  Since $p_s[v,k]$ is a minimal path from $v$ to $k$,
  Equations~(\ref{eqn:min-path}) and~(\ref{eqn:new-path}) imply that
  so is $\bar{p}[v,k]$.  

  We now show that the conditions of Lemma~\ref{LEMMA:PATH-CONCAT} are
  satisfied at the beginning of iteration $i$.  In particular,
  Condition~\ref{cond:new-first} holds, because $p_s[v,k]$ is acyclic
  (since it is minimal) and, hence, so is the path $p_s[v,z]$.
  Condition~\ref{cond:new-second} is satisfied, because of how
  $p_m[z,k]$ is defined.  Condition~\ref{cond:first} holds, because we chose 
  $z$ to be a node where a path to $k$ is frozen.
  Condition~\ref{cond:second} is satisfied, because of how $z$ was
  chosen and the fact that $\bar{p}[v,k]$ is minimal.
  Condition~\ref{cond:third} is satisfied, because we have assumed
  that $v.\mathit{marks[k]}$ is changed from $\act$ to $\vis$ during
  iteration $i$.

By Lemma~\ref{LEMMA:PATH-CONCAT}, a suffix of
$\bar p[v,k]$ must be on the queue at the beginning of iteration
$i$. This contradicts our assumption that a non-minimal path (which
has a strictly higher weight than any suffix of $\bar p[v,k]$) changes
$v.\mathit{marks[k]}$ from $\act$ to $\vis$ in iteration $i$.
\end{proof}

\begin{restatable}{lemma}{lemmaOQ}
\label{LEMMA:ON-QUEUE}
For all nodes $v\in \V$ and keywords $k\in K$, such that $k$ is reachable
from $v$, if $v.\mathit{marks[k]}$ is $\act$ at the beginning of an iteration
of the main loop (line~\ref{alg:gtf:mainLoop_start}), then $Q$ contains a suffix
(which is not necessarily proper) of a minimal path from $v$ to $k$.
\end{restatable}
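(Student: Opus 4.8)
The plan is to prove Lemma~\ref{LEMMA:ON-QUEUE} by induction on the iterations of the main loop, maintaining the stated property as an invariant: at the beginning of every iteration, for each node $v$ with $v.\mathit{marks[k]}=\act$ and $k$ reachable from $v$, the queue $Q$ contains a suffix of some minimal path from $v$ to $k$. The base case is the first iteration, where $Q$ has been initialized with $\langle k\rangle$ for every $k\in K$ and all marks are $\act$; for such a $v$, fix any minimal path $p_s[v,k]$, and observe that $\langle k\rangle$ is a (possibly improper, if $v=k$) suffix of it, so the invariant holds.

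For the inductive step I would assume the invariant holds at the start of iteration $i$ and show it still holds at the start of iteration $i+1$. Fix a node $v$ with $v.\mathit{marks[k]}=\act$ at the start of iteration $i+1$ and $k$ reachable from $v$; then $v.\mathit{marks[k]}$ was also $\act$ at the start of iteration $i$ (marks only move $\act \to \vis \to \inan$), so by the inductive hypothesis $Q$ held a suffix $q$ of some minimal path $p_s[v,k]$ at the start of iteration $i$. I would then do a case analysis on what iteration $i$ does with the path $p$ it pops. If $p$ is not (a suffix of) $p_s[v,k]$, then the only ways $q$ can leave $Q$ are: it is exactly the popped path $p$; or it is frozen; or it is unfrozen/inserted again — none of which is triggered by processing a different path, except that \textbf{unfreeze} can fire on $q$'s first node. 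But \textbf{unfreeze} sets that node's mark to $\inan$, which (by Lemma~\ref{LEMMA:GTF-SHORTEST-PATH-MARKS} and the structure of the algorithm, since $\inan$ is only assigned to nodes on a path to a $K$-root) would have to have happened via a path whose nodes get marked $\inan$, and one can argue this is incompatible with $v.\mathit{marks[k]}$ still being $\act$ at start of $i+1$ when $v$ lies on the relevant minimal path; so in this sub-case $q$ remains on $Q$. The substantive sub-case is when iteration $i$ pops $q$ itself, i.e.\ $p=q=p_s[u,k]$ for the first node $u$ of $q$. Here I would split further: (a) $q$ gets frozen at $u$ — then $u.\mathit{marks[k]}=\vis$, so $u\ne v$; I then want to invoke Lemma~\ref{LEMMA:PATH-CONCAT} with $p_s[v,u]$ the prefix and $p_m[u,k]$ a minimal path that earlier marked $u$ as $\vis$ (which exists by Lemma~\ref{LEMMA:GTF-SHORTEST-PATH-MARKS}), concluding that a suffix of the resulting minimal path $\bar p[v,k]=p_s[v,u]\concat p_m[u,k]$ is on $Q$ at the start of iteration $i+1$; checking Conditions~\ref{cond:new-first}--\ref{cond:third} of that lemma is the bulk of the work. (b) $q$ gets relaxed — then all extensions $v'\rightarrow q$ that are on-queue-eligible (in particular those along $p_s[v,k]$, which are acyclic since $p_s$ is minimal, so the test of line~\ref{alg:gtf:testEssential} passes) are inserted into $Q$, and the next node of $p_s[v,k]$ toward $v$ gives a longer suffix of $p_s[v,k]$ on $Q$ (and if $u=v$ we are done because then $v.\mathit{marks[k]}$ was just changed to $\vis$, contradicting that it is $\act$ at start of $i+1$). (c) $q$ is discarded by line~\ref{alg:gtf:reassignFalse} — impossible, since $q$ is acyclic.

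The main obstacle, as usual in this development, is sub-case (a): carefully verifying that Condition~\ref{cond:second} of Lemma~\ref{LEMMA:PATH-CONCAT} holds, i.e.\ that no proper suffix $\bar p[x,k]$ for $x$ on $p_s[v,u]$ (other than $u$) is itself frozen at $x$. This requires choosing $u$ minimally — $u$ should be the \emph{first} node along the minimal path $p_s[v,k]$ (traversing from $v$ to $k$) at which a suffix of some minimal path from $v$ to $k$ is frozen at the start of iteration $i+1$ — so that by construction no earlier node carries such a frozen suffix, exactly mirroring the choice of $z$ in the proof of Lemma~\ref{LEMMA:GTF-SHORTEST-PATH-MARKS}. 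One then has to argue that this $u$ exists (because $q$, a minimal suffix, is not on $Q$, not exposed, and cannot have been discarded), replay that existence argument, and feed $(p_s[v,u], p_m[u,k])$ into Lemma~\ref{LEMMA:PATH-CONCAT}. A secondary subtlety is handling the interaction with unfreezing: when $v.\mathit{marks[k]}$ is still $\act$, no node of a minimal path to $k$ that "passes through" $v$ can yet be marked $\inan$, and I would state this explicitly as a small auxiliary observation and use it to rule out the problematic interference of \textbf{unfreeze} both in the "different path" sub-case and when arguing that the candidate frozen suffix in (a) survives until the start of iteration $i+1$.
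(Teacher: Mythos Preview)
Your inductive approach differs from the paper's, which argues by direct contradiction in a single shot: assume the lemma fails at the start of some iteration $i$; since the full minimal path has not been exposed ($v.\mathit{marks}[k]$ is still $\act$) and suffixes of minimal paths cannot be discarded (they are acyclic), some proper suffix of every minimal path from $v$ to $k$ must be frozen; pick $z$ closest from $v$ among nodes holding such a frozen suffix, let $p_m[z,k]$ be the minimal path that set $z.\mathit{marks}[k]$ to $\vis$ (Lemma~\ref{LEMMA:GTF-SHORTEST-PATH-MARKS}), let $p_s[v,z]$ be minimal, and apply Lemma~\ref{LEMMA:PATH-CONCAT} to $p_s[v,z]\concat p_m[z,k]$ to get a suffix on $Q$---a contradiction. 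This is precisely your sub-case~(a) once you ``choose $u$ minimally.'' The inductive wrapper, the tracking of a specific witness $q$ across one iteration, and the case splits (b), (c), and ``$p$ is a different path'' are scaffolding that never feeds into the hard case (you invoke Lemma~\ref{LEMMA:PATH-CONCAT} there, not the inductive hypothesis), so the paper simply omits them.

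Two points in your write-up should be corrected. First, the ``different path'' sub-case is simpler than you make it: a path leaves $Q$ only by being popped in line~\ref{alg:gtf:popBestPath}; freezing acts on the just-popped path, and \textbf{unfreeze} only \emph{inserts} into $Q$. So if $p\neq q$ then $q$ trivially stays on $Q$, and there is no interference to rule out. Second, your auxiliary observation is false: a node $z$ on a minimal path from $v$ to $k$ can have $z.\mathit{marks}[k]=\inan$ while $v.\mathit{marks}[k]=\act$ (take any already-discovered $K$-root $r$ with $z$ on one of its paths to $k$, and $v$ not yet reached). Fortunately you do not actually need this claim: since you select $u$ by inspecting what is frozen \emph{at the start of iteration $i{+}1$}, there is nothing that must ``survive'' an intervening \textbf{unfreeze}. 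Once these are cleaned up, what remains of your argument coincides with the paper's proof.
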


\begin{proof}
The lemma is certainly true at the beginning of the first iteration,
because the path $\anset{k}$ is on $Q$.
Suppose that the lemma does not hold at the beginning of iteration $i$.
Thus, every minimal path $p[v,k]$ has a proper suffix that is frozen 
at the beginning of iteration $i$.
(Note that a suffix of a minimal path cannot be discarded either by the 
test of line~\ref{alg:gtf:testEssential} or due to 
line~\ref{alg:gtf:reassignFalse}, because it is acyclic.)
Let $z$ be the closest node from $v$ having such a frozen suffix.
Hence, $z.\mathit{marks[k]}$ is $\vis$ and $z\not= v$
(because $v.\mathit{marks[k]}$ is $\act$).
By Lemma~\ref{LEMMA:GTF-SHORTEST-PATH-MARKS},
a minimal path $p_m[z,k]$ has changed $z.\mathit{marks[k]}$ to $\vis$.
Let $p_s[v,z]$ be a minimal path from $v$ to $z$.
Consider the path
\begin{equation*} 
\bar{p}[v,k] = p_s[v,z] \concat p_m[z,k].
\end{equation*}
The weight of $\bar{p}[v,k]$ is no more than that of a minimal path
from $v$ to $k$, because both $p_s[v,z]$ and $p_m[v,k]$ are minimal
and the choice of $z$ implies that it is on some minimal path from $v$
to $k$.  Hence, $\bar{p}[v,k]$ is a minimal path from $v$ to $k$.

We now show that the conditions of Lemma~\ref{LEMMA:PATH-CONCAT} are satisfied.
Conditions~\ref{cond:new-first}--\ref{cond:first} clearly hold.
Condition~\ref{cond:second} is satisfied because of how $z$ is chosen
and the fact that $\bar{p}[v,k]$ is minimal.
Condition~\ref{cond:third} holds because $v.\mathit{marks[k]}$ is $\act$
at the beginning of iteration $i$.

By Lemma~\ref{LEMMA:PATH-CONCAT}, a suffix of $\bar{p}[v,k]$ is on $Q$
at the beginning of iteration $i$, contradicting our initial assumption.
\end{proof}

\begin{lemma}\label{lemma:finite}
Any constructed path can have at most $2n(n+1)$ nodes, where $n=|V|$ 
(i.e.,~the number of nodes in the graph).
Hence, the algorithm constructs at most $(n+1)^{2n(n+1)}$ paths. 
\end{lemma}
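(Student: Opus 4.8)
The statement has two parts: a bound on the number of nodes of a single constructed path, and, as a corollary, a bound on the number of constructed paths. The second part is routine once the first is proved: there are at most $\sum_{\ell=1}^{2n(n+1)} n^{\ell}$ node sequences of length at most $2n(n+1)$ over $n$ nodes, and this sum is below $(n+1)^{2n(n+1)}$, so the set of constructed paths (each of which is such a sequence) has size at most $(n+1)^{2n(n+1)}$. Hence the plan is really to bound the length, and since $|V|=n$ it suffices to show that no node of $G$ occurs more than $2(n+1)$ times on any constructed path.

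\textbf{Setting up the per-node count.} Constructed paths are built back-to-front by relaxation (line~\ref{alg:gtf:insertQ}), so write the path as $p=\langle v_1,\dots,v_m\rangle$, where $v_m=k$ is the query keyword at which $p$ ends and, for $i<m$, the prefix $\langle v_i,\dots,v_m\rangle$ is produced by prepending $v_i$ to $\langle v_{i+1},\dots,v_m\rangle$ in some iteration; let $t_i$ denote that iteration, so $t_1>t_2>\dots>t_m$. If $p$ is acyclic it has at most $n$ nodes and there is nothing to prove, so assume some node $w\neq k$ occurs at positions $j_1<\dots<j_t$ with $t\ge 2$ (and note $t_{j_1}>\dots>t_{j_t}$). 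For each $a<t$, the relaxation in iteration $t_{j_a}$ prepends $w$ to a prefix that still contains $w$ (at position $j_{a+1}$); therefore that prefix is cyclic, and its front node is not marked $\vis$ for $k$ (otherwise the prefix would have been frozen, not relaxed). Consequently the extension creating $\langle v_{j_a},\dots,v_m\rangle$ must be \emph{essential} (line~\ref{alg:gtf:testEssential}), which means the section running from the new copy of $w$ to the next occurrence of $w$ — i.e.\ positions $j_a,\dots,j_{a+1}$ — contains a node $v_{q_a}$ ($j_a\le q_a\le j_{a+1}$) whose mark for $k$ equals $\vis$ at iteration $t_{j_a}$.

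\textbf{The heart of the argument.} The task is to bound $t$ using these witnesses together with the monotone behaviour of a mark for a fixed keyword ($\act\to\vis\to\inan$, never reverting), which forces each node to carry the mark $\vis$ for $k$ during a single contiguous block of iterations. First I would refine the witness: by Lemma~\ref{LEMMA:ON-QUEUE}, if the prefix relaxed in iteration $t_{j_a}$ had an $\act$-marked front node then $Q$ would contain a suffix of a minimal path to $k$ of weight no larger than that prefix's weight, forcing that prefix to be minimal, hence acyclic — impossible for $a\le t-2$, since it then contains $w$ twice. So for $a\le t-2$ the front node of that prefix is $\inan$, and the essential witness $v_{q_a}$ is therefore either $w$ itself or a node strictly inside the loop that differs from the front. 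For witnesses equal to $w$, all the relevant iterations $t_{j_a}$ lie in $w$'s single $\vis$-block; for witnesses different from $w$, the positions $q_a$ lie in pairwise disjoint open sub-intervals of $p$, so the only way one node $v'\neq w$ can witness several of $w$'s loops is to occur several times on $p$, and one bounds this reuse using that the prefix starting at $q_a$ is constructed strictly between $t_{j_{a+1}}$ and $t_{j_a}$ and that a cyclic extension can be produced only at a path's first exposure (a consequence of Proposition~\ref{prop:twice} and the fact that after unfreezing a path all its nodes are $\inan$ for $k$). Combining these observations should give $t-1\le 2n$, hence $t\le 2(n+1)$; summing over the at most $n$ nodes of $G$ yields $m\le 2n(n+1)$, and the bound on the number of constructed paths follows as above.

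\textbf{Main obstacle.} I expect the bookkeeping of the previous paragraph to be the hard part: pinning down exactly how many of a single repeated node $w$'s loops can share an essential witness, i.e.\ taming the temporal reuse of $\vis$-marks while juggling the (at most two) exposures of each of the intervening prefixes. By contrast, the acyclic case, the reduction to a per-node occurrence bound, and the final counting of node sequences are all straightforward.
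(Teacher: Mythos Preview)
Your approach differs from the paper's. You try to bound the number of occurrences of a fixed node $w$ by tracking essentiality witnesses on each loop of $w$. The paper takes a dual view: it defines a \emph{repeated run} as a maximal segment of the constructed path in which every node is at its third or later occurrence (counting from the keyword end), and bounds the length of such runs directly.

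The paper's key observation is more elementary than the machinery you invoke: whenever the first node $v'$ of a path $p'[v',k]$ is at its third occurrence in $p'$, then $v'.\mathit{marks}[k]$ is already $\inan$. This uses only the freeze/unfreeze mechanics (the first path reaching $v'$ sets the mark to $\vis$; the second-occurrence suffix must freeze and is released only when the mark becomes $\inan$), and does \emph{not} rely on Lemma~\ref{LEMMA:ON-QUEUE} or Lemma~\ref{LEMMA:GTF-SHORTEST-PATH-MARKS}. From this it follows that a repeated run cannot contain a cycle: all nodes on such a cycle would carry mark $\inan$, so the extension would fail the test of line~\ref{alg:gtf:testEssential}. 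Hence a repeated run has at most $n$ nodes. Since consecutive repeated runs are separated by a position at which some node is at its first or second occurrence, and there are at most $2n$ such positions, the bound $2n(n+1)$ drops out immediately. Your final counting of node sequences is the same as the paper's.

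Your plan is in principle viable, but the part you yourself flag as hard --- the bookkeeping that controls how often a single witness can be reused across $w$'s loops --- is a genuine gap as stated: you give no argument that ``combining these observations should give $t-1\le 2n$'', and the sketch involving temporal reuse of $\vis$-blocks and the two-exposure bound does not obviously close. In fact, if you push your own line of reasoning through using Lemma~\ref{LEMMA:GTF-SHORTEST-PATH-MARKS} rather than Lemma~\ref{LEMMA:ON-QUEUE}, you can show directly that for $a\le t-2$ \emph{every} node on the section $v_{j_a},\dots,v_{j_{a+1}}$ has mark $\inan$ at the essentiality check (each intermediate suffix contains $w$ at least twice, hence is cyclic, hence cannot be the minimal path that sets $\vis$, hence is relaxed only with mark $\inan$); this already forces $t\le 2$ and makes the elaborate witness-reuse analysis unnecessary. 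So the cleaner route is either the paper's repeated-run argument or this sharpening of yours; the bookkeeping you outline is both harder and superfluous.
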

\begin{proof}
  We say that $v_m \rightarrow \cdots \rightarrow v_1$ is a
  \emph{repeated run} in a path $\bar p$ if some suffix (not
  necessarily proper) of $\bar p$ has the form $v_m \rightarrow \cdots
  \rightarrow v_1 \rightarrow p$, where each $v_i$ also appears in any
  two positions of $p$.  In other words, for all $i$ ($1\le i \le m$),
  the occurrence of $v_i$ in $v_m \rightarrow \cdots \rightarrow v_1$
  is (at least) the third one in the suffix $v_m \rightarrow \cdots
  \rightarrow v_1 \rightarrow p$.  (We say that it is the third,
  rather than the first, because paths are constructed backwards).

  When a path $p'[v',k']$ reaches a node $v'$ for the third time, the
  mark of $v'$ for the keyword $k'$ has already been changed to $\inan$
  in a previous iteration. This follows from the following two
  observations. First, the first path to reach a node $v'$ is also the
  one to change its mark to $\vis$. Second, a path that reaches a node
  marked as $\vis$ can be unfrozen only when that mark is changed to $\inan$.

  Let $v_m \rightarrow \cdots \rightarrow v_1$ be a repeated run in
  $\bar p$ and suppose that $m>n=|V|$. Hence, there is a node $v_i$
  that appears twice in the repeated run; that is, there is a $j<i$,
  such that $v_j=v_i$.  If the path $v_i \rightarrow \cdots
  \rightarrow v_1 \rightarrow p$ is considered in the loop of
  line~\ref{alg:gtf:relaxStart}, then it would fail the test of
  line~\ref{alg:gtf:testEssential} (because, as explained earlier, all
  the nodes on the cycle $v_i \rightarrow \cdots \rightarrow v_j$ are already
  marked as $\inan$).  We conclude that the algorithm does not
  construct paths that have a repeated run with more than $n$ nodes.

  It thus follows that two disjoint repeated runs of a constructed
  path $\bar p$ must be separated by a node that appears (in a
  position between them) for the first or second time. A path can have
  at most $2n$ positions, such that in each one a node appears for the
  first or second time. Therefore, if a path $\bar p$ is constructed
  by the algorithm, then it can have at most $2n(n+1)$ nodes.
  Using $n$ distinct nodes, we can construct at most
  $(n+1)^{2n(n+1)}$ paths with $2n(n+1)$ or fewer nodes.

\end{proof}

\begin{lemma}\label{lemma:all-roots}
$K$-Roots have the following two properties.
\begin{enumerate}
\item\label{part:all-roots-part-one}
All the $K$-roots are discovered before the algorithm terminates.
Moreover, they are discovered in the increasing order of their best heights.
\item\label{part:all-roots-part-two}
Suppose that $r$ is a $K$-root with a best height $b$.
If $p[v,k]$ is a path (from any node $v$ to any keyword $k$)
that is exposed before the iteration that discovers $r$ as a $K$-root,
then $\fweight(p[v,k]) \le b$.
\end{enumerate}
\end{lemma}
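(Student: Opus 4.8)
The plan is to reduce everything to Lemma~\ref{LEMMA:ON-QUEUE} and the fact that the main loop always removes a least-weight path from $Q$, using one bookkeeping fact about the pseudocode: \emph{at every point of the execution, $v.\mathit{marks}[k]\neq\act$ if and only if $v.\mathit{paths}[k]\neq\emptyset$.} I would prove this equivalence first. Both sides are monotone in time (a mark never reverts to $\act$; the only deletion from $v.\mathit{paths}[k]$ is the one-time removal of cyclic paths in line~\ref{alg:gtf:removeCyclic}, which leaves the acyclic minimal first-path in place), so it suffices to show they flip in the same iteration. A mark leaves $\act$ only in line~\ref{alg:gtf:markVisited}: $\funfr$ never changes a mark straight from $\act$ to $\inan$, because any path handed to $\funfr$ is a \emph{complete} path to some keyword $k$, hence each of its suffixes starting at a node $\bar v$ was exposed in an earlier iteration, and the first exposure of that suffix (either frozen at $\bar v$, which forces $\bar v.\mathit{marks}[k]=\vis$, or processed by line~\ref{alg:gtf:markVisited}) already moved $\bar v.\mathit{marks}[k]$ off $\act$. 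When line~\ref{alg:gtf:markVisited} fires for $v$ and $k$, the flag $v.\mathit{isKRoot}$ is still $\false$ (otherwise $v.\mathit{paths}[k]$ would already be non-empty, forcing $v.\mathit{marks}[k]\neq\act$, contrary to line~\ref{alg:gtf:markVisited} being applicable), so control reaches line~\ref{alg:gtf:addToPaths} and the current path enters $v.\mathit{paths}[k]$; and a path first enters $v.\mathit{paths}[k]$ only via line~\ref{alg:gtf:addToPaths}, in an iteration that has just executed line~\ref{alg:gtf:markVisited}. From the equivalence I extract the characterization needed below: the first path from $v$ to $k$ that is exposed is the one that moves $v.\mathit{marks}[k]$ off $\act$, so by Lemma~\ref{LEMMA:GTF-SHORTEST-PATH-MARKS} it is a minimal path from $v$ to $k$, and $v$ is discovered as a $K$-root precisely in the iteration in which the \emph{last} such minimal first-path (over $k\in K$) is exposed.

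For Part~\ref{part:all-roots-part-two}: let $q$ be exposed in iteration $i$ and let $r$ be a $K$-root of best height $b$ not discovered before iteration $i$. Then $v.\mathit{isKRoot}=\false$ for $r$ at the start of iteration $i$, so some $r.\mathit{paths}[k]=\emptyset$ there, and hence $r.\mathit{marks}[k]=\act$ by the equivalence. Since $r$ is a $K$-root, $k$ is reachable from $r$, so Lemma~\ref{LEMMA:ON-QUEUE} puts a suffix $s$ of a minimal path from $r$ to $k$ on $Q$ at the start of iteration $i$; its weight is at most the weight $b_k$ of that minimal path, and $b_k\le b$. As iteration $i$ pops a least-weight element of $Q$ and $s\in Q$, we get $\fweight(q)\le\fweight(s)\le b$, which is exactly Part~\ref{part:all-roots-part-two}.

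For Part~\ref{part:all-roots-part-one}: termination follows from Proposition~\ref{prop:twice} and Lemma~\ref{lemma:finite} (finitely many constructed paths, each exposed at most twice, so finitely many main-loop iterations), and at termination $Q$ is empty. Applying Lemma~\ref{LEMMA:ON-QUEUE} at that final state, no node $v$ from which a keyword $k$ is reachable can still have $v.\mathit{marks}[k]=\act$, else $Q$ would be non-empty. A $K$-root reaches every keyword, so all its marks are off $\act$, hence (by the equivalence) all its path-lists are non-empty, hence it was discovered. For the order, fix a $K$-root $r$ of best height $b$ and, for each $k\in K$, let $\pi_k$ be the first path from $r$ to $k$ that is exposed, a minimal path of weight $b_k$ (so $b=\max_k b_k$). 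The $\pi_k$ are exposed in non-decreasing order of $b_k$: if $b_k<b_{k'}$ but $\pi_{k'}$ were exposed first, then at that iteration $r.\mathit{marks}[k]=\act$, so Lemma~\ref{LEMMA:ON-QUEUE} would put on $Q$ an element of weight at most $b_k<b_{k'}=\fweight(\pi_{k'})$, contradicting that $\pi_{k'}$ is the popped element. Hence the last $\pi_k$ to be exposed has weight $b$, and by the characterization from the first paragraph that exposure is exactly the iteration that discovers $r$; so the iteration discovering any $K$-root exposes a path whose weight equals that root's best height. Now if two $K$-roots $r,r'$ had best heights $b<b'$ but $r'$ were discovered first, then the weight-$b'$ path exposed when $r'$ is discovered would be exposed before $r$'s discovering iteration, so by Part~\ref{part:all-roots-part-two} it would have weight at most $b<b'$ — a contradiction. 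Thus the $K$-roots are discovered in non-decreasing order of best height.

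The main obstacle is the marks/paths equivalence, and inside it the one delicate point — that $\funfr$ never pushes a mark from $\act$ directly to $\inan$ — whose proof rests on the observation that a path fed to $\funfr$ has all of its suffixes exposed beforehand. Everything after that is a short deduction from Lemma~\ref{LEMMA:ON-QUEUE} and the priority-queue discipline; the only other point needing a little care is that invoking Lemma~\ref{LEMMA:ON-QUEUE} "at termination" is legitimate because that lemma is really a statement about the state at the boundary between successive iterations.
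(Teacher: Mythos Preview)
Your proof is correct and follows essentially the same route as the paper: both hinge on Lemma~\ref{LEMMA:ON-QUEUE}, the priority-queue discipline, Lemma~\ref{LEMMA:GTF-SHORTEST-PATH-MARKS}, and the finiteness results (Proposition~\ref{prop:twice} and Lemma~\ref{lemma:finite}). The paper simply asserts your marks/paths equivalence as an observation in the preliminaries (``the test of line~\ref{alg:gtf:becomesRoot} is $\true$ if and only if for all $k\in K$, $v.\mathit{marks}[k]$ is either $\vis$ or $\inan$'') without justifying the delicate point that $\funfr$ never moves a mark straight from $\act$ to $\inan$; your explicit treatment of that point is a genuine gain in rigor rather than a different idea. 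Organizationally, the paper proves Part~\ref{part:all-roots-part-one} first and gets the ordering by a direct appeal to Lemma~\ref{LEMMA:ON-QUEUE} (if $r_2$ with the larger best height were discovered first, a weight-$b_2$ path would be popped while a suffix of weight at most $b_1$ still sits on $Q$), whereas you establish Part~\ref{part:all-roots-part-two} first and then invoke it for the ordering---but this is the same contradiction unpacked in a different order, not a different argument.
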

\begin{proof}
We first prove Part~\ref{part:all-roots-part-one}. 
Suppose that a keyword $k$ is reachable from node $v$.
As long as $v.\mathit{marks[k]}$ is $\act$ at the beginning
of the main loop (line~\ref{alg:gtf:mainLoop_start}), 
Lemma~\ref{LEMMA:ON-QUEUE} implies that 
the queue $Q$ contains (at least) one suffix of a minimal path from $v$ to $k$.
By Lemma~\ref{lemma:finite}, the algorithm constructs a finite number of
paths.  By Proposition~\ref{prop:twice}, the same path can be inserted
into the queue at most twice. 
Since the algorithm does not terminate while $Q$ is not empty,
$v.\mathit{marks[k]}$ must be changed to $\vis$ after a finite time.
It thus follows that each $K$-root is discovered after a finite time.

Next, we show that the $K$-roots are discovered in the increasing order of
their best heights. Let $r_1$ and $r_2$ be two $K$-roots with best heights
$b_1$ and $b_2$, respectively, such that $b_1 < b_2$.
Lemma~\ref{LEMMA:GTF-SHORTEST-PATH-MARKS} implies the following for
$r_i$ ($i=1,2$).  For all keywords $k\in K$, a minimal path from $r_i$
to $k$ changes $r_i.\mathit{marks[k]}$ from $\act$ to $\vis$; that is,
$r_i$ is discovered as a $K$-root by minimal paths.
Suppose, by way of contradiction, that $r_2$ is discovered first.
Hence, a path with weight $b_2$ is removed from $Q$ while
Lemma~\ref{LEMMA:ON-QUEUE} implies that
a suffix with a weight of at most $b_1$ is still on $Q$.
This contradiction completes the proof of Part~\ref{part:all-roots-part-one}. 

Now, we prove Part~\ref{part:all-roots-part-two}. 
As shown in the proof of Part~\ref{part:all-roots-part-one},
a $K$-root is discovered by minimal paths. 
Let $r$ be a $K$-root with best height $b$.
Suppose, by way of contradiction, that a path $p[v,k]$,
such that $\fweight(p[v,k]) > b$,
is exposed before the iteration, say $i$, that discovers $r$ as a $K$-root.
By Lemma~\ref{LEMMA:ON-QUEUE}, at the beginning of iteration $i$,
the queue $Q$ contains a suffix with weight of at most $b$. Hence,
$p[v,k]$ cannot be removed from $Q$ at the beginning of iteration $i$.
This contradiction proves Part~\ref{part:all-roots-part-two}. 
\end{proof}

\begin{restatable}{lemma}{lemmaIH}
\label{LEMMA:INCREASING-HEIGHT}
Suppose that node $v$ is discovered as a $K$-root at iteration $i$.
Let $p_1[v',k']$ and $p_2[v,k]$ be paths that are exposed in iterations 
$j_1$ and $j_2$, respectively. If $i < j_1 < j_2$, then 
$\fweight(p_1[v',k']) \le \fweight(p_2[v,k])$.
Note that $k$ and $k'$ are not necessarily the same and similarly for
$v$ and $v'$; moreover, $v'$ has not necessarily been discovered as a $K$-root. 
\end{restatable}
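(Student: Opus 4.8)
The plan is a proof by contradiction: assume $\fweight(p_1[v',k']) > \fweight(p_2[v,k])$ and derive a contradiction. First I would record two facts about the $K$-root $v$. In the iteration $i$ that discovers $v$, the nested loop of lines~\ref{alg:gtf:newRootStart}--\ref{alg:gtf:newRootEnd} calls $\textbf{unfreeze}$ on a path in $v.\mathit{paths}[k''] \neq \emptyset$ for every $k''\in\K$; such a path starts at $v$, so the first step of $\textbf{unfreeze}$ sets $v.\mathit{marks}[k'']$ to $\inan$ (and flushes $v.\mathit{frozen}[k'']$ into $Q$). Hence from the end of iteration $i$ onward $v.\mathit{marks}[k'']=\inan$ for all $k''\in\K$; consequently $\textbf{freeze}$ never again freezes a path at $v$, and every path ever frozen at $v$ has been re-inserted into $Q$ by iteration $i$. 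It follows that iteration $j_2$ is the \emph{first} exposure of $p_2$: otherwise $p_2$ had been frozen (necessarily at its first node $v$, by the code of $\textbf{freeze}$) and later unfrozen; by the above this happened no later than iteration $i$, so the re-inserted copy of $p_2$ would sit in $Q$ continuously until $j_2$, hence be present in $Q$ at the start of iteration $j_1$ --- impossible, since the strictly heavier $p_1$ is the path popped there (line~\ref{alg:gtf:popBestPath}). Since $p_2$ has never been exposed, it entered $Q$ at most once, necessarily through line~\ref{alg:gtf:insertQ}: it was constructed by relaxing its immediate suffix $q:=p_2[x_1,k]$ (with $x_1$ the second node of $p_2$) at some iteration $j_4<j_2$, so $q$ is exposed at $j_4$, and $\fweight(q)<\fweight(p_2)<\fweight(p_1)$ because weights are positive.

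Next I would run the main engine of the argument. At the start of iteration $j_1$ every path in $Q$ has weight $\ge\fweight(p_1)$ (since $p_1$ is popped there), and relaxation only produces paths heavier than the one being relaxed; therefore the weight of the lightest path in $Q$ can drop below $\fweight(p_1)$ only via an $\textbf{unfreeze}$ (line~\ref{unfreeze:insert}). Tracing the relaxation-ancestor chain of $p_2$ --- $p_2=p_2[x_0,k]$, then $q=p_2[x_1,k]$, then $p_2[x_2,k]$, and so on, each a strictly lighter suffix --- each successive suffix is exposed at a strictly earlier iteration, and (as in the previous paragraph, comparing with $p_1$ at $j_1$) each such exposure is forced to lie strictly after $j_1$; since the chain cannot bottom out at $\langle k\rangle$ (which is exposed well before $i<j_1$), it must bottom out at a suffix that is \emph{re-exposed} after having been frozen, and the corresponding $\textbf{unfreeze}$ takes place in $[j_1,j_2)$ and inserts a path of weight $<\fweight(p_1)$. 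Let $j_u\in[j_1,j_2)$ be the earliest such $\textbf{unfreeze}$ and $r$ the light path it inserts, $\fweight(r)<\fweight(p_1)$. By minimality of $j_u$, at the start of $j_u$ every path in $Q$ still has weight $\ge\fweight(p_1)$, so nothing of weight $<\fweight(p_1)$ is exposed in $[j_1,j_u)$; hence the (unique, by Proposition~\ref{prop:twice}) earlier exposure $t_0$ of $r$, at which it was frozen at $w:=\firstp(r)$, satisfies $t_0<j_1$, and the mark $w.\mathit{marks}[\kappa]$ ($\kappa$ the keyword where $r$ ends) equals $\vis$ throughout $[t_0,j_u)$.

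The crux is to analyze why the $\textbf{unfreeze}$ fires at $j_u$. By the first paragraph applied to $w$, the node $w$ is not a known $K$-root before $j_u$, and it cannot be discovered as a $K$-root strictly inside $(t_0,j_u)$ (such a discovery would already have unfrozen $r$). So the $\textbf{unfreeze}$ of $r$ is triggered either via line~\ref{alg:gtf:unfrAtOldRoot} --- the path exposed at $j_u$ starts at an already-discovered $K$-root $\rho\neq w$ and passes through $w$ --- or via line~\ref{alg:gtf:unfrNewRoot} --- some node (possibly $w$) is freshly discovered as a $K$-root at $j_u$. In each case I would derive a contradiction with $\fweight(r)<\fweight(p_1)$ by combining: Lemma~\ref{LEMMA:GTF-SHORTEST-PATH-MARKS}, which forces $w.\mathit{marks}[\kappa]$ to have turned $\vis$ via a minimal $w$-to-$\kappa$ path --- and that minimal path, in order for the triggering path at $j_u$ to have been built at all, must have been \emph{relaxed} rather than frozen, which pins down when it was exposed and bounds $\fweight(r)$ from below; Lemma~\ref{lemma:all-roots}, which orders the $K$-root involved by best height and bounds the weights of everything exposed before its discovery; Proposition~\ref{prop:twice}; and the invariant just established that no path lighter than $\fweight(p_1)$ is in $Q$ anywhere in $[j_1,j_u]$.

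The hard part is precisely this last step: the bookkeeping of which node's mark is $\vis$ at which iteration, and of exactly which frozen list is flushed by which $K$-root discovery, so that the $\textbf{unfreeze}$ at $j_u$ is pinned tightly enough to contradict $\fweight(r)<\fweight(p_1)$; it may well be cleanest to carry the whole argument as an induction on $j_2$ so that the inductive hypothesis can be applied to the iteration $j_u<j_2$ and to the $K$-root triggering the unfreeze. I also expect a small amount of additional care for the possibility that $p_2$, $q$, or $r$ is cyclic, which interacts with the ``essential'' test of line~\ref{alg:gtf:testEssential} and with the suppression of relaxation in line~\ref{alg:gtf:reassignFalse}.
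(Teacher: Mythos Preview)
Your opening analysis is sound: you correctly establish that after iteration $i$ all marks at $v$ are $\inan$, that $j_2$ is the first exposure of $p_2$, and (via the relaxation chain) that some proper suffix of $p_2$ must be frozen somewhere in the interval $[j_1,j_2)$ and was already frozen before $j_1$. This is exactly the starting observation the paper makes, though the paper states it in one line: since $p_2$ is lighter than $p_1$ yet exposed only later, a proper suffix of $p_2$ is frozen at some node $z$ at the start of iteration $j_1$.

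The genuine gap is in what you call the ``hard part,'' and it is not merely bookkeeping. By switching from ``a frozen suffix of $p_2$'' to ``the earliest unfreeze in $[j_1,j_2)$ that inserts \emph{any} light path $r$,'' you lose the one structural fact you need: that the frozen node lies on a path from the specific $K$-root $v$. Your node $w=\firstp(r)$ need not be reachable from $v$ at all, and the $K$-root $\rho$ that triggers the unfreeze need have no relation to $v$. There is nothing contradictory per se about a heavy path (popped at $j_u$) reaching a $K$-root and thereby releasing a cheap frozen path; the induction on $j_2$ you propose would have to compare $r$ against a path ending at $\rho$, not at $v$, and I do not see how that closes.

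The paper avoids this entirely. It keeps the frozen node $z$ on $p_2$ (choosing, over all eligible $p_2$, one whose frozen suffix sits at a node $z$ closest \emph{from} $v$), takes a \emph{minimal} path $p_s[v,z]$ and the minimal path $p_m[z,k]$ that set $z.\mathit{marks}[k]$ to $\vis$ (Lemma~\ref{LEMMA:GTF-SHORTEST-PATH-MARKS}), and forms $\bar p[v,k]=p_s[v,z]\concat p_m[z,k]$; then $\fweight(\bar p)\le\fweight(p_2)<\fweight(p_1)$. The whole ``hard part'' is delegated to the auxiliary Lemma~\ref{LEMMA:PATH-CONCAT}, whose hypotheses are verified using only the minimality of $p_s$ and the closeness choice of $z$; that lemma yields a suffix of $\bar p$ in $Q$ at the start of $j_1$, contradicting the pop of $p_1$. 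In short: do not trace the dynamics of the actual suffixes of $p_2$ and the triggering unfreeze; instead replace the $v$-to-$z$ segment by a minimal path and invoke Lemma~\ref{LEMMA:PATH-CONCAT}. That is the missing idea.
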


\begin{proof}
Suppose the lemma is false. In particular, consider an iteration $j_1$
of the main loop (line~\ref{alg:gtf:mainLoop_start})
that violates the lemma. That is, the following hold in iteration $j_1$.
\begin{itemize}
\item Node $v$ has already been discovered as a $K$-root in an earlier
  iteration (so, there are no frozen paths at $v$).
\item A path $p_1[v',k']$ is exposed in iteration $j_1$.
\item A path $p_2[v,k]$ having a strictly lower weight than $p_1[v',k']$ 
(i.e.,~$\fweight(p_2[v,k]) < \fweight(p_1[v',k'])$) will be
  exposed after iteration $j_1$. Hence, a proper suffix of this path is
  frozen at some node $z$ during iteration $j_1$.
\end{itemize}

For a given $v$ and $p_1[v',k']$, there could be several paths
$p_2[v,k]$ that satisfy the third condition above. We choose one, such
that its suffix is frozen at a node $z$ that is closest from $v$.
Since $v$ has already been discovered as a $K$-root,
$z$ is different from $v$.

Clearly, $z.\mathit{marks[k]}$ is changed to $\vis$ before iteration
$j_1$.  By Lemma~\ref{LEMMA:GTF-SHORTEST-PATH-MARKS}, a minimal path
$p_m[z,k]$ does that.  Let $p_s[v,z]$ be a minimal path from $v$ to
$z$.

Consider the path 
\begin{equation*}
\bar p[v,k] = p_s[v,z] \concat p_m[z,k].
\end{equation*}  
Since both $p_s[v,z]$ and $p_m[z,k]$ are minimal, the weight
of their concatenation (i.e.,~$\bar p[v,k]$) is no more than that of
$p_2[v,k]$ (which is also a path that passes through node $z$).  Hence,
$\fweight(\bar p[v,k]) < \fweight(p_1[v',k'])$.

We now show that the conditions of Lemma~\ref{LEMMA:PATH-CONCAT} are 
satisfied at the beginning of iteration $j_1$
(i.e.,~$j_1$ corresponds to $i$ in Lemma~\ref{LEMMA:PATH-CONCAT}).
Conditions~\ref{cond:new-first}--\ref{cond:new-second} clearly hold.
Condition~\ref{cond:first} is satisfied because a suffix of $p_2[v,k]$
is frozen at $z$.
Condition~\ref{cond:second} holds, because
of the choice of $z$ and the fact
$\fweight(\bar p[v,k]) < \fweight(p_1[v',k'])$ that was shown earlier.
Condition~\ref{cond:third} holds, because otherwise
$\bar p[v,k]$ would be unfrozen and $z.\mathit{marks[k]}$ would be $\inan$
rather than $\vis$.

By Lemma~\ref{LEMMA:PATH-CONCAT}, a suffix of $\bar p[v,k]$ is on the
queue at the beginning of iteration $j_1$.  This contradicts the
assumption that the path $p_1[v',k']$ is removed from the
queue at the beginning of iteration $j_1$, because $\bar p[v,k]$ (and, hence,
any of its suffixes) has a strictly lower weight.
\end{proof}

\begin{restatable}{lemma}{lemmaNVWT}
\label{LEMMA:NO-VISIT-WHEN-TERMINATING}
For all nodes $v\in \V$, such that $v$ is a $K$-root,
the following holds.
If $z$ is a node on a simple path from $v$ to some $k\in K$,
then $z.\mathit{marks[k]} \not= \vis$ when the algorithm terminates. 
\end{restatable}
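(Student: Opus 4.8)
The plan is to argue by contradiction through a carefully chosen minimal counterexample. Call $(\bar r,k,p,z)$ a \emph{bad tuple} if $\bar r$ is a $K$-root, $k\in K$, $p=\langle u_0=\bar r,\dots,u_m=k\rangle$ is a simple path from $\bar r$ to $k$ passing through $z=u_j$, and $z.\mathit{marks}[k]=\vis$ when the algorithm terminates. If the lemma fails, a bad tuple exists, and I would take one, say $(r,k,p,z)$, minimizing the weight of the prefix $p[r,z]$. A few facts follow at once. We have $z\neq r$ and $z\neq k$: when $r$ is discovered as a $K$-root (Lemma~\ref{lemma:all-roots}, Part~\ref{part:all-roots-part-one}), line~\ref{alg:gtf:unfrNewRoot} applies $\textbf{unfreeze}$ to a minimal path in $r.\mathit{paths}[k]$, whose first and last nodes are $r$ and $k$, so both $r.\mathit{marks}[k]$ and $k.\mathit{marks}[k]$ are set to $\inan$. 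By minimality of the prefix weight, no $u_i$ with $i<j$ is part of a bad tuple, hence $u_i.\mathit{marks}[k]\neq\vis$ at termination; and since a node from which $k$ is reachable cannot keep the mark $\act$ forever (as in the proof of Lemma~\ref{lemma:all-roots}, Part~\ref{part:all-roots-part-one}, using Lemma~\ref{LEMMA:ON-QUEUE} together with the termination guaranteed by Lemma~\ref{lemma:finite} and Proposition~\ref{prop:twice}), we get $u_i.\mathit{marks}[k]=\inan$ at termination for every $i<j$.

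The core is three claims. \textbf{(I)} Since a mark changes only in the order $\act\to\vis\to\inan$ and $z.\mathit{marks}[k]$ is $\vis$ at termination, no execution of $\textbf{unfreeze}(\cdot,Q)$ ever walks through $z$ for keyword $k$. \textbf{(II)} Consequently, no path from a $K$-root to $k$ through $z$ is ever inserted into $Q$: such a path $\bar p$ (with first node a $K$-root $\bar r$) would be exposed, and depending on whether this happens before, at, or after the iteration $i_{\bar r}$ that discovers $\bar r$ as a $K$-root, $\textbf{unfreeze}$ would eventually be applied to $\bar p$ itself --- directly in line~\ref{alg:gtf:unfrAtOldRoot}, or through line~\ref{alg:gtf:unfrNewRoot} because $\bar p\in \bar r.\mathit{paths}[k]$ at $i_{\bar r}$, or because $\bar p$ is first frozen at $\bar r$ and then released when the minimal path in $\bar r.\mathit{paths}[k]$ is unfrozen at $i_{\bar r}$ and $\bar p$ is re-exposed while $\bar r.\mathit{marks}[k]=\inan$ --- so, $z$ being on $\bar p$, this contradicts (I). (This works even when $\bar p$ is cyclic, since the $\textbf{unfreeze}$ calls in lines~\ref{alg:gtf:unfrAtOldRoot} and~\ref{alg:gtf:unfrNewRoot} precede the cyclic-path handling of lines~\ref{alg:gtf:reassignFalse} and~\ref{alg:gtf:removeCyclic}.) \textbf{(III)} Nevertheless some such path \emph{is} inserted into $Q$, which is the contradiction that finishes the proof.

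To prove (III): since $z.\mathit{marks}[k]=\vis$, Lemma~\ref{LEMMA:GTF-SHORTEST-PATH-MARKS} supplies a minimal path $m[z,k]$ that was exposed and changed $z.\mathit{marks}[k]$ from $\act$ to $\vis$; in that iteration $z$ was not discovered as a $K$-root (else line~\ref{alg:gtf:unfrNewRoot} would carry $\textbf{unfreeze}$ through $z$, contradicting (I)), so $m[z,k]$ was relaxed and $u_{j-1}\!\rightarrow\! m[z,k]$ entered $Q$ (line~\ref{alg:gtf:testEssential} passes: either $u_{j-1}$ is not on $m[z,k]$, or the cycle it creates contains $z$, whose mark is then $\vis$). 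I would then prove, by induction for $\ell=j-1,j-2,\dots,0$, that the path $P_\ell=u_\ell\!\rightarrow\!\cdots\!\rightarrow\! u_{j-1}\!\rightarrow\! m[z,k]$ is inserted into $Q$, hence (as $Q$ eventually empties) exposed: at exposure, either $\textbf{unfreeze}$ is carried through $z$ (if $u_\ell$ is, or becomes, a $K$-root, via line~\ref{alg:gtf:unfrAtOldRoot} or~\ref{alg:gtf:unfrNewRoot}), contradicting (I), or $P_\ell$ is relaxed --- possibly only after being frozen at $u_\ell$ while $u_\ell.\mathit{marks}[k]=\vis$ and later released, since $u_\ell.\mathit{marks}[k]$ becomes $\inan$ --- in which case $P_{\ell-1}$ enters $Q$ (again $P_\ell$ contains $z$, so any cycle created by prepending $u_{\ell-1}$ is essential). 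Note $P_\ell$ is never handled while $u_\ell.\mathit{marks}[k]=\act$ unless it is minimal there, by Lemma~\ref{LEMMA:GTF-SHORTEST-PATH-MARKS}. For $\ell=0$ this produces a path from the $K$-root $r$ to $k$ through $z$, contradicting (II).

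The step I expect to be the main obstacle is claim (III), for two reasons. First, a backward extension of a path enters $Q$ only after the nodes it traverses are unfrozen, whereas a node is unfrozen only when a path from a $K$-root through it is processed; this apparent circularity is exactly why the induction runs over the prefix nodes $u_{j-1},\dots,u_0$ and why the minimal-prefix-weight choice --- which forces each $u_i$ with $i<j$ to end up $\inan$ --- is indispensable: it lets every intermediate frozen path be released without knowing beforehand that the whole path reaches $r$. Second, the bookkeeping around the essentiality test of line~\ref{alg:gtf:testEssential} is delicate: when $m[z,k]$ and the prefix share nodes the intermediate paths are cyclic, and one must check at each prepending step --- at the right moment relative to when $u_\ell.\mathit{marks}[k]$ turns $\inan$ --- that the created cycle contains a node ($z$ will do) currently marked $\vis$, so that the extension is really inserted. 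Getting these two points right is the intricate part the paper warns about.
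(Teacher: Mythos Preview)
Your argument is correct, but the route is quite different from the paper's and considerably more laborious. The paper's proof is a short, direct application of Lemma~\ref{LEMMA:PATH-CONCAT}: pick $z$ closest \emph{from} $v$ among violating nodes (for fixed $v$ and $k$), take a minimal $p_s[v,z]$ and the path $p_m[z,k]$ that set $z$'s mark to $\vis$, form $\bar p[v,k]=p_s[v,z]\concat p_m[z,k]$, verify the five conditions of Lemma~\ref{LEMMA:PATH-CONCAT} at the terminating iteration, and conclude that some suffix of $\bar p[v,k]$ is still on $Q$---contradicting that $Q$ is empty. The whole proof is about ten lines because Lemma~\ref{LEMMA:PATH-CONCAT} already packages the case analysis about frozen versus discarded versus on-queue suffixes.

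Your decomposition into claims (I)--(III) and the backward induction building $P_\ell=u_\ell\!\rightarrow\!\cdots\!\rightarrow\!u_{j-1}\!\rightarrow\!m[z,k]$ is sound: the minimality of the prefix weight does force each $u_i$ ($i<j$) to end in state $\inan$, which is exactly what lets every frozen intermediate $P_\ell$ be released; and the essentiality check at line~\ref{alg:gtf:testEssential} does pass because the simplicity of $p$ guarantees that any repeated occurrence of $u_{\ell-1}$ lies strictly inside $m[z,k]$, so the created cycle contains $z$ (with mark $\vis$ throughout). In effect you are re-deriving, by hand and for this specific configuration, the content of Lemma~\ref{LEMMA:PATH-CONCAT}. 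What the paper's approach buys is brevity and reuse of a lemma already proved for Lemmas~\ref{LEMMA:GTF-SHORTEST-PATH-MARKS}, \ref{LEMMA:ON-QUEUE} and~\ref{LEMMA:INCREASING-HEIGHT}; what your approach buys is self-containment and an explicit operational picture of why the algorithm cannot terminate with $z$ still $\vis$. Note also that claim~(II) is not really needed as a separate step: once $P_0$ is inserted and $r$ is a $K$-root, the $\ell=0$ case of your own induction already forces an $\textbf{unfreeze}$ through $z$ and contradicts~(I) directly.
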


\begin{proof} 
The algorithm terminates when the test of
line~\ref{alg:gtf:mainLoop_start} shows that $Q$ is empty.  Suppose
that the lemma is not true.  Consider some specific $K$-root $v$ and
keyword $k$ for which the lemma does not hold. Among all the nodes $z$
that violate the lemma with respect to $v$ and $k$, let $z$ be a
closest one from $v$.  Observe that $z$ cannot be $v$, because of the
following two reasons.  First, by Lemma~\ref{lemma:all-roots}, node $v$ is
discovered as a $K$-root before termination. Second, when a $K$-root is
discovered (in
lines~\ref{alg:gtf:becomesRoot}--\ref{alg:gtf:isRootGetsTrue}), all
its marks become $\inan$ in
lines~\ref{alg:gtf:newRootStart}--\ref{alg:gtf:unfrNewRoot}.

Suppose that $p_m[z,k]$ is the path that changes $z.\mathit{marks[k]}$
to $\vis$.  Let $p_s[v,z]$ be a minimal path from $v$ to $z$. Note
that $p_s[v,z]$ exists, because $z$ is on a simple path from $v$ to $k$.
Consider the path 
\begin{equation*}
\bar p[v,k] =  p_s[v,z] \concat p_m[z,k].  
\end{equation*}

Suppose that the test of line~\ref{alg:gtf:mainLoop_start} is \textbf{false}
(and, hence, the algorithm terminates) on iteration $i$.
We now show that the conditions of Lemma~\ref{LEMMA:PATH-CONCAT}
are satisfied at the beginning of that iteration.
Conditions~\ref{cond:new-first}--\ref{cond:new-second} of
Lemma~\ref{LEMMA:PATH-CONCAT} clearly hold. 
Conditions~\ref{cond:first}--\ref{cond:second} are satisfied
because of how $z$ is chosen. Condition~\ref{cond:third} holds, because
otherwise $z.\mathit{marks[k]}$ should have been changed to $\inan$.

By Lemma~\ref{LEMMA:PATH-CONCAT}, a suffix of $\bar p[v,k]$ is on $Q$
when iteration $i$ begins, contradicting our assumption that $Q$ is empty.
\end{proof}

\begin{theorem}\label{theorem:gtf-correct}
GTF is correct. In particular, it finds all and only answers
to the query $K$ by increasing height within $2(n+1)^{2n(n+1)}$ iterations
of the main loop (line~\ref{alg:gtf:mainLoop_start}), where $n=|V|$. 
\end{theorem}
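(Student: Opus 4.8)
The plan is to establish the three claims of the theorem separately: termination within the stated bound, soundness (only answers are output), and completeness (all answers are output, by increasing height). Termination is immediate: each iteration of the main loop exposes exactly one path (line~\ref{alg:gtf:popBestPath}); by Lemma~\ref{lemma:finite} the algorithm ever constructs at most $(n+1)^{2n(n+1)}$ distinct paths (re-insertions performed by \funfr\ create no new paths), and by Proposition~\ref{prop:twice} each path is exposed at most twice, so the loop runs at most $2(n+1)^{2n(n+1)}$ times. Soundness is also short: $\prodAns(v.\mathit{paths},p)$ runs only when $v$ is a $K$-root, and it outputs $a$ only when the iterator $\itgrs$ has certified that $a$ is a tree obtained by combining one path $q_i[v,k_i]$ per keyword (drawn from $v.\mathit{paths}$) and, moreover, the root $v$ of $a$ has more than one child; since keyword nodes have only incoming edges, the leaves of $a$ are exactly $k_1,\dots,k_n$, so $a$ is a $K$-subtree, and by the characterization recalled in Section~\ref{sec:prelim} it is therefore an answer.

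For completeness, let $A$ be an answer rooted at $r$, and for each $k_j\in K$ let $p_j[r,k_j]$ be the (simple) path of $A$ from $r$ to $k_j$, so that $A=\bigcup_j p_j$ and $h(A)=\max_j\fweight(p_j)$. Since $r$ is a $K$-root, Lemma~\ref{lemma:all-roots} guarantees it is discovered before termination. I would then prove, by induction on the suffixes of $p_j$ starting from $\langle k_j\rangle$ (which is on $Q$ initially), that each $p_j$ is constructed and eventually stored in $r.\mathit{paths}[k_j]$. The inductive step uses two observations: relaxing an acyclic suffix $p_j[u',k_j]$ inserts the next suffix $p_j[u,k_j]$ into $Q$, because the extending node $u$ does not occur on $p_j[u',k_j]$ (as $p_j$ is simple), and the flag $\mathit{relax}$ is reset only for cyclic paths at already-discovered $K$-roots; and whenever a suffix $p_j[u,k_j]$ is frozen at a node $u$ on the simple path $p_j[r,k_j]$, Lemma~\ref{LEMMA:NO-VISIT-WHEN-TERMINATING} forces $u.\mathit{marks}[k_j]\neq\vis$ at termination, so $u.\mathit{marks}[k_j]$ must become $\inan$, which unfreezes that suffix. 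Since the algorithm runs until $Q$ is empty, $p_j$ is eventually exposed while $r.\mathit{marks}[k_j]\neq\vis$ and is added to $r.\mathit{paths}[k_j]$, from which (being acyclic) it is never later removed. Now let $t_A$ be the iteration at which the last of $p_1,\dots,p_n$, say $p_{j_0}$, is added; at $t_A$ either $r$ is already a $K$-root or it becomes one, and in both cases $\prodAns(r.\mathit{paths},p_{j_0})$ is called with $\mathcal{P}[k_{j_0}]=\{p_{j_0}\}$ and $p_j\in\mathcal{P}[k_j]$ for $j\neq j_0$; the iterator then enumerates the tree-forming combination $(p_1,\dots,p_n)$, which is $A$, and as $r$ has at least two children in $A$ the answer is output. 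The same bookkeeping shows that $A$ is output exactly once, at iteration $t_A$, and that the path exposed at $t_A$ is one of $A$'s own paths.

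For the height ordering, suppose $A$ and $A'$ are output at iterations $t_A<t_{A'}$; it must be shown that $h(A)\le h(A')$. The path exposed at $t_{A'}$ is the distinguished path of $A'$ in that $\prodAns$ call, hence one of $A'$'s paths, so its weight is at most $h(A')$ and it is rooted at $r'$, the root of $A'$. On the other hand, all paths of $A$ lie in $r.\mathit{paths}$ by iteration $t_A$, so a path $q$ with $\fweight(q)=h(A)$ was exposed at some iteration $t_q\le t_A<t_{A'}$. Letting $i_{r'}$ denote the iteration that discovers $r'$ and $b$ its best height, there are two cases. If $t_q\le i_{r'}$, then Lemma~\ref{lemma:all-roots}(\ref{part:all-roots-part-two})---together with the fact that the path exposed at $i_{r'}$ is minimal and so has weight at most $b$---yields $h(A)=\fweight(q)\le b\le h(A')$. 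If $t_q>i_{r'}$, then Lemma~\ref{LEMMA:INCREASING-HEIGHT}, applied with $v=r'$, with $p_1=q$ exposed at $t_q$ and with $p_2$ the path exposed at $t_{A'}$ (which is rooted at $r'$), yields $\fweight(q)\le\fweight(p_2)\le h(A')$. Either way $h(A)\le h(A')$; the ordering of answers produced within a single iteration is delivered by the iterator $\itgrs$.

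The step I expect to be the main obstacle is the completeness argument that every path $p_j$ of an answer actually reaches $r.\mathit{paths}[k_j]$: this requires a careful joint handling of freezing at intermediate nodes, the essentiality test that governs relaxation of cyclic paths, and the lazy removal of cyclic paths at $K$-root discovery, and it is precisely here that Lemma~\ref{LEMMA:NO-VISIT-WHEN-TERMINATING} is indispensable. A secondary delicate point is that exposed paths are \emph{not} monotone in weight over the whole run (freezing and unfreezing break monotonicity), so the height-ordering argument genuinely needs the case split above, distinguishing whether the heaviest path of $A$ is exposed before or after $A'$'s root is discovered.
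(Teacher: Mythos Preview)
Your proposal is correct and follows essentially the same route as the paper. Termination via Lemma~\ref{lemma:finite} and Proposition~\ref{prop:twice}, completeness via Lemma~\ref{lemma:all-roots} together with Lemma~\ref{LEMMA:NO-VISIT-WHEN-TERMINATING}, and the height-ordering case split (on whether the heaviest path of the earlier answer is exposed before or after the later answer's root is discovered, invoking Lemma~\ref{lemma:all-roots}(\ref{part:all-roots-part-two}) and Lemma~\ref{LEMMA:INCREASING-HEIGHT} respectively) are exactly the paper's structure; your completeness argument spells out the suffix-by-suffix induction and the call to $\prodAns$ at the distinguished iteration $t_A$ in more detail than the paper does, but the substance is the same.
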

\begin{proof}
By Lemma~\ref{lemma:finite},
the algorithm constructs at most $(n+1)^{2n(n+1)}$ paths.
By Proposition~\ref{prop:twice},
a path can be inserted into the queue $Q$ at most twice.
Thus, the algorithm terminates after at most $2(n+1)^{2n(n+1)}$ iterations
of the main loop.

By Part~\ref{part:all-roots-part-one} of Lemma~\ref{lemma:all-roots},
all the $K$-roots are discovered.
By Lemma~\ref{LEMMA:NO-VISIT-WHEN-TERMINATING},
no suffix of a simple path from a $K$-root to a keyword can be frozen
upon termination. Clearly, no such suffix can be on $Q$
when the algorithms terminates. Hence, the algorithm constructs 
all the simple paths from each $K$-root to every 
keyword. It thus follows that the algorithm finds all the answers to $K$.
Clearly, the algorithm generates only valid answers to $K$.

Next, we prove that the answers are produced in the order of
increasing height.  So, consider answers $a_1$ and $a_2$ 
that are produced in iterations $j'_1$ and $j_2$, respectively.  
For the answer $a_i$ ($i=1,2$),
let $r_i$ and $h_i$ be its $K$-root and height, respectively.
In addition, let $b_i$ be the best height of $r_i$ ($i=1,2$).

Suppose that $j'_1 < j_2$. We have to prove that $h_1 \le h_2$.
By way of contradiction, we assume that $h_1 > h_2$.
By the definition of best height, $h_2 \ge b_2$.  Hence, $h_1 > b_2$.

Let $p_2[r_2,k]$ be the path of $a_2$ that is exposed 
(i.e.,~removed from $Q$) in iterations $j_2$.
Suppose that $p_1[r_1,k']$ is a path of $a_1$, such that 
$\fweight(p_1[r_1,k'])=h_1$ and $p_1[r_1,k']$ is exposed in the iteration $j_1$
that is as close to iteration $j'_1$ as possible
(among all the paths of $a_1$ from $r_1$ to a keyword with a weight
equal to $h_1$). Clearly, $j_1 \le j'_1$ and hence $j_1 < j_2$.

We now show that $\fweight(p_1[r_1,k']) < h_1$, in contradiction to
$\fweight(p_1[r_1,k'])=h_1$. Hence, the claim that $h_1 \le h_2$ follows.
Let $i$ be the iteration that discovers $r_2$ as a $K$-root.
There are two cases to consider as follows.

\begin{description}
\item[Case 1: $i < j_1$.] In this case, $i < j_1 < j_2$, since $j_1 <
  j_2$.  By Lemma~\ref{LEMMA:INCREASING-HEIGHT},
  $\fweight(p_1[r_1,k']) \le \fweight(p_2[r_2,k])$.  (Note that we
  apply Lemma~\ref{LEMMA:INCREASING-HEIGHT} after replacing $v$ and
  $v'$ with $r_2$ and $r_1$, respectively.)  Hence,
  $\fweight(p_1[r_1,k']) < h_1$, because $\fweight(p_2[r_2,k]) \le h_2$
  follows from the definition of height and we have assumed that $h_1 > h_2$.
\item[Case 2: $j_1 \le i$.]
  By Part~\ref{part:all-roots-part-two} of Lemma~\ref{lemma:all-roots},
  $\fweight(p_1[r_1,k']) \le b_2$. Hence, $\fweight(p_1[r_1,k']) < h_1$,
  because we have shown earlier that $h_1 > b_2$.
\end{description}
Thus, we have derived a contradiction and, hence,
it follows that answers are produced by increasing height.
\end{proof}

\begin{corollary}\label{cor:running}
  The running time of the algorithm GTF is
  $O\left(kn(n+1)^{2kn(n+1)+1}\right)$, where $n$ and $k$ are the number of
  nodes in the graph and keywords in the query, respectively.
\end{corollary}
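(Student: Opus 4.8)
The plan is to multiply the iteration bound of Theorem~\ref{theorem:gtf-correct} by an upper bound on the cost of a single iteration, and then to observe that the calls to $\prodAns$ dominate all other work.

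Write $P=(n+1)^{2n(n+1)}$ for the bound of Lemma~\ref{lemma:finite} on the number of constructed paths, and recall from that lemma that every constructed path has at most $2n(n+1)=O(n^{2})$ nodes. By Theorem~\ref{theorem:gtf-correct} the main loop runs at most $2P$ times. I would first dispose of the cheap parts of an iteration. By Proposition~\ref{prop:twice} every path enters $Q$ at most twice, so $|Q|\le 2P$ throughout and every priority-queue operation costs $O(\log P)=O(n^{2}\log n)$. A single iteration performs one removal from $Q$, at most $n$ insertions in the relaxation loop of line~\ref{alg:gtf:relaxStart}, and, inside $\funfr$, at most $2n(n+1)$ mark updates together with some insertions of unfrozen paths; since a path is frozen (and hence unfrozen) at most once, the number of the latter insertions, summed over the whole run, is at most $P$. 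Hence the total time spent outside $\prodAns$ is $P$ times a polynomial in $n$, which is $o(P^{2})$ and in particular $o(P^{k})$ because $k\ge 2$.

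It remains to bound the total cost of the $\prodAns$ calls. At most one such call occurs per iteration, because lines~\ref{alg:gtf:prodAns1} and~\ref{alg:gtf:prodAns2} lie in different branches of the test of line~\ref{alg:gtf:testOldRoot}. In a call $\prodAns(v.\mathit{paths},p)$, line~\ref{alg:gt:prodAns:changeP} pins the slot of the keyword $p.\pkwd$ to the single path $p$, so the iterator $\itgrs$ yields at most $\prod_{k'\neq p.\pkwd}|v.\mathit{paths}[k']|\le P^{k-1}$ path groups (assuming, as is standard, that $\itgrs$ runs with polynomial overhead per group it inspects); for each group, obtaining it via $\nexti()$, assembling the candidate subtree, testing its root and possibly emitting it costs $O(k\cdot n(n+1))$, since it concatenates $k$ paths of $O(n^{2})$ nodes. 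Multiplying the $2P$ iterations by $P^{k-1}$ groups per call and by the $O(kn(n+1))$ cost per group yields $O\bigl(kn(n+1)\cdot P^{k}\bigr)=O\bigl(kn(n+1)^{2kn(n+1)+1}\bigr)$, which also absorbs the $o(P^{k})$ term from the previous paragraph, giving the claimed bound.

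The only point that needs care is the exponent bookkeeping: each $\prodAns$ call ranges over just $k-1$ free coordinates (the slot $p.\pkwd$ being pinned), which replaces $P^{k}$ by $P^{k-1}$, and precisely this saved factor of $P$ is restored by the $O(P)$ bound on the number of iterations, so the product is again $P^{k}=(n+1)^{2kn(n+1)}$. A supporting observation is that Proposition~\ref{prop:twice} keeps $|Q|\le 2P$, so the logarithmic priority-queue factor, and likewise the $\funfr$ work, remain polynomial and therefore negligible against $P^{k}$. I do not expect a genuine obstacle here: once the per-iteration accounting is spelled out, the corollary is a routine consequence of Lemma~\ref{lemma:finite}, Proposition~\ref{prop:twice} and Theorem~\ref{theorem:gtf-correct}.
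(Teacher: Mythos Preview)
Your proposal is correct and follows essentially the same approach as the paper: bound the per-call cost of $\prodAns$ by $P^{k-1}$ groups times $O(kn(n+1))$ per group, then multiply by the $2P$ iteration bound from Theorem~\ref{theorem:gtf-correct} to obtain $O(kn(n+1)P^{k})$. The paper's proof is terser---it simply asserts that $\prodAns$ is ``the most expensive operation'' without the explicit accounting for priority-queue operations and $\funfr$ that you supply---but the core argument and the arithmetic are identical.
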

\begin{proof}
  The most expensive operation is a call to $\prodAns(v.paths,p)$. By
  Lemma~\ref{lemma:finite}, there are at most $(n+1)^{2n(n+1)}$ paths.
  A call to the procedure $\prodAns(v.paths,p)$ considers all combinations of
  $k-1$ paths plus $p$. For each combination, all its $k$ paths are
  traversed in linear time. Thus, the total cost of one call to
  $\prodAns(v.paths,p)$ is $O\left(kn(n+1)(n+1)^{(k-1)2n(n+1)}\right)$.  By
  Theorem~\ref{theorem:gtf-correct}, there are at most
  $2(n+1)^{2n(n+1)}$ iterations. Hence, the running time is
  $O\left(kn(n+1)^{2kn(n+1)+1}\right)$.
\end{proof}

\section{Summary of the Experiments}\label{sec:experiments_sum}
In this section, we summarize our experiments.
The full description of the methodology and results is
given in Appendix~\ref{sec:experiments}
.
We performed extensive experiments to measure the efficiency of GTF.
The experiments were done on the
Mondial\footnote{http://www.dbis.informatik.uni-goettingen.de/Mondial/}
and DBLP\footnote{http://dblp.uni-trier.de/xml/} datasets.

To test the effect of freezing, we ran the naive approach (described
in Section~\ref{sec:naive}) and GTF on both datasets.  We measured the
running times of both algorithms for generating the top-$k$ answers
($k=100, 300, 1000$).  We discovered that the freezing technique gives
an improvement of up to about one order of magnitude.  It has a greater
effect on Mondial than on DBLP, because the former is highly cyclic
and, therefore, has more paths (on average) between a pair of nodes.
Freezing has a greater effect on long queries than short ones. This is
good, because the bigger the query, the longer it takes to produce its
answers.  This phenomenon is due to the fact that the average height
of answers increases with the number of keywords. Hence, the naive
approach has to construct longer (and probably more) paths that do not
contribute to answers, whereas GTF avoids most of that work.

In addition, we compared the running times of GTF with those of
BANKS~\cite{icdeBHNCS02,vldbKPCSDK05}, BLINKS~\cite{sigmodHWYY07},
SPARK~\cite{tkdeLWLZWL11} and ParLMT~\cite{pvldbGKS11}.  The last one
is a parallel implementation of~\cite{sigmodGKS08}; we used its
variant ES (early freezing with single popping) with 8 threads.  BANKS
has two versions, namely, MI-BkS~\cite{icdeBHNCS02} and
BiS~\cite{vldbKPCSDK05}. The latter is faster than the former by up to one
order of magnitude and we used it for the running-time comparison.

GTF is almost always the best, except in two particular cases.
First, when generating $1,000$ answers over Mondial, SPARK is
better than GTF by a tiny margin on queries with 9 keywords, but is
slower by a factor of two when averaging over all queries.  On DBLP, however,
SPARK is slower than GTF by up to two orders of magnitude.  Second, when
generating $100$ answers over DBLP, BiS is slightly better than GTF on
queries with 9 keywords, but is 3.5 times slower when averaging over all
queries.  On Mondial, however, BiS is slower than GTF by up to one
order of magnitude.  All in all, BiS is the second best algorithm in most
of the cases. The other systems are slower than GTF by one to two
orders of magnitude.

Not only is our system faster, it is also
increasingly more efficient as either the number of generated answers
or the size of the data graph grows.  This may seem counterintuitive,
because our algorithm is capable of generating all paths (between a
node and a keyword) rather than just the minimal one(s).  However, our
algorithm generates non-minimal paths only when they can potentially
contribute to an answer, so it does not waste time on doing useless
work.  Moreover, if only minimal paths are constructed, then longer
ones may be needed in order to produce the same number of answers,
thereby causing more work compared with an algorithm that is capable
of generating all paths.

GTF does not miss answers (i.e.,~it is capable of generating all of them).
Among the other systems we tested, 
ParLMT~\cite{pvldbGKS11} has this property and is theoretically
superior to GTF, because it enumerates
answers with polynomial delay (in a 2-approximate order of increasing height),
whereas the delay of GTF could be exponential.  In our experiments,
however, ParLMT was slower by two orders of magnitude, even though it
is a parallel algorithm (that employed eight cores in our tests).
Moreover, on a large dataset, ParLMT ran out of memory when the query had seven 
keywords. The big practical advantage of GTF over ParLMT 
is explained as follows.
The former constructs paths incrementally whereas the latter
(which is based on the Lawler-Murty procedure~\cite{mansciL72,orM68}) 
has to solve a new optimization problem for each produced answer,
which is costly in terms of both time and space.

A critical question is how important it is to have an algorithm that
is capable of producing all the answers.  We compared our algorithm
with BANKS. Its two versions only generate answers consisting
of minimal paths and, moreover, those produced by BiS have distinct
roots.  BiS (which is overall the second most efficient system in our
experiments) misses between $81\%$ (on DBLP) to $95\%$ (on Mondial) of
the answers among the top-$100$ generated by GTF.  MI-BkS misses much
fewer answers, that is, between $1.8\%$ (on DBLP) and $32\%$ (on
Mondial), but it is slower than BiS by up to one order of magnitude. For
both versions the percentage of misses increases as the number of
generated answers grows.  This is a valid and significant comparison,
because our algorithm generates answers in the same order as BiS and
MI-BkS, namely, by increasing height.

\section{Conclusions}
We presented the GTF algorithm for enumerating, by increasing height,
answers to keyword search over data graphs.  Our main contribution is
the freezing technique for avoiding the construction of (most if not
all) non-minimal paths until it is determined that they can reach
$K$-roots (i.e.,~potentially be parts of answers).  Freezing is an
intuitive idea, but its incorporation in the GTF algorithm involves
subtle details and requires an intricate proof of correctness.  In
particular, cyclic paths must be constructed (see
Appendix~\ref{sec:looping_paths}
), although they are not part of any
answer.  For efficiency's sake, however, it is essential to limit the
creation of cyclic paths as much as possible, which is accomplished by
lines~\ref{alg:gtf:reassignFalse} and~\ref{alg:gtf:testEssential} of
Figure~\ref{alg:gtf}.

Freezing is not merely of theoretical importance.  Our extensive
experiments (described in Section~\ref{sec:experiments_sum}
and  Appendix~\ref{sec:experiments}
) show
that freezing increases efficiency by up to about one order of magnitude
compared with the naive approach (of Section~\ref{sec:naive}) that
does not use it.

The experiments of Section~\ref{sec:experiments_sum} and 
Appendix~\ref{sec:experiments} 
also show that in
comparison to other systems, GTF is almost always the best, sometimes
by several orders of magnitude.
Moreover, our algorithm is more scalable than other systems.
The efficiency of GTF is a significant achievement especially in light
of the fact that it is complete (i.e.,~does not miss answers). 
Our experiments show that some of the other systems 
sacrifice completeness for the sake of efficiency.
Practically, it means that they generate longer paths resulting
in answers that are likely to be less relevant than the missed ones.

The superiority of GTF over ParLMT is an indication that polynomial
delay might not be a good yard stick for measuring the practical
efficiency of an enumeration algorithm. An important topic for future
work is to develop theoretical tools that are more appropriate for
predicting the practical efficiency of those algorithms.

\bibliographystyle{plain}
\bibliography{main}

\begin{thebibliography}{10}

\bibitem{icdeBHNCS02}
Gaurav Bhalotia, Arvind Hulgeri, Charuta Nakhe, Soumen Chakrabarti, and
  S.~Sudarshan.
\newblock Keyword searching and browsing in databases using {BANKS}.
\newblock In {\em ICDE}, 2002.

\bibitem{tkdeCW14}
Joel Coffman and Alfred~C. Weaver.
\newblock An empirical performance evaluation of relational keyword search
  techniques.
\newblock {\em {IEEE} Trans. Knowl. Data Eng.}, 26(1):30--42, 2014.

\bibitem{sigmodGKS08}
Konstantin Golenberg, Benny Kimelfeld, and Yehoshua Sagiv.
\newblock Keyword proximity search in complex data graphs.
\newblock In {\em SIGMOD}, 2008.

\bibitem{pvldbGKS11}
Konstantin Golenberg, Benny Kimelfeld, and Yehoshua Sagiv.
\newblock Optimizing and parallelizing ranked enumeration.
\newblock {\em PVLDB}, 2011.

\bibitem{ICDT16}
Konstantin Golenberg and Yehoshua Sagiv.
\newblock A practically efficient algorithm for generating answers to keyword
  search over data graphs.
\newblock In {\em ICDT}, 2016.

\bibitem{sigmodHWYY07}
Hao He, Haixun Wang, Jun Yang, and Philip~S. Yu.
\newblock {BLINKS}: ranked keyword searches on graphs.
\newblock In {\em SIGMOD}, 2007.

\bibitem{vldbKPCSDK05}
Varun Kacholia, Shashank Pandit, Soumen Chakrabarti, S.~Sudarshan, Rushi Desai,
  and Hrishikesh Karambelkar.
\newblock Bidirectional expansion for keyword search on graph databases.
\newblock In {\em VLDB}, 2005.

\bibitem{mansciL72}
E.~L. Lawler.
\newblock A procedure for computing the $k$ best solutions to discrete
  optimization problems and its application to the shortest path problem.
\newblock {\em Management Science}, 1972.

\bibitem{tkdeLWLZWL11}
Yi~Luo, Wei Wang, Xuemin Lin, Xiaofang Zhou, Jianmin Wang, and Keqiu Li.
\newblock {SPARK2}: Top-k keyword query in relational databases.
\newblock {\em IEEE Trans. Knowl. Data Eng.}, 2011.

\bibitem{orM68}
K.~G. Murty.
\newblock An algorithm for ranking all the assignments in order of increasing
  cost.
\newblock {\em Operations Research}, 1968.

\bibitem{icdeTWRC09}
Thanh Tran, Haofen Wang, Sebastian Rudolph, and Philipp Cimiano.
\newblock Top-k exploration of query candidates for efficient keyword search on
  graph-shaped ({RDF}) data.
\newblock In {\em ICDE}, 2009.

\end{thebibliography}

\newpage

\appendix
\section{The Need for Essential Paths }\label{sec:looping_paths}
In this section, we give an example showing that essential cyclic paths must
be constructed by the GTF algorithm, in order not to miss some answers.

Suppose that we modify the test of line~\ref{alg:gtf:testEssential}
of Figure~\ref{alg:gtf} to be ``$v'$ is not on $p$'' (i.e.,~we omit
the second part, namely, ``$v' \rightarrow p$ is essential'').
Note that the new test means that only acyclic paths are inserted into $Q$.

\tikzstyle{vertex}=[circle,draw,minimum size=14pt,inner sep=1pt]   
\tikzstyle{keyword}=[rectangle,draw,minimum size=14pt,inner sep=1pt]  
\usetikzlibrary{positioning}

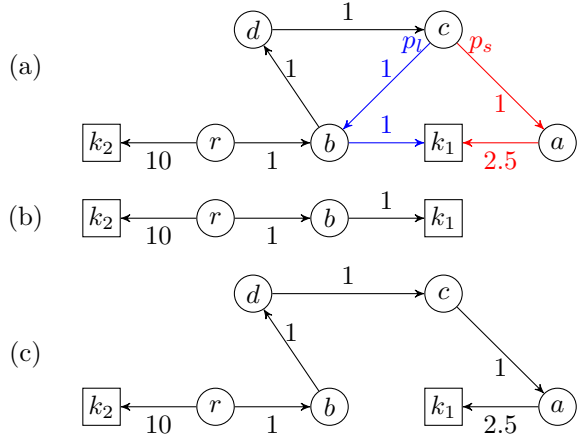
\begin{figure}[tbh]
\centering
\begin{tikzpicture}[->,>=stealth',shorten >=0pt,auto,node distance=1.5cm]

  
  \node[keyword] (k1) at (0,0) {$k_1$};
  \node[vertex] (a) [right of=k1]{$a$};
  \node[vertex] (b) [left of=k1]{$b$};  
  \node[vertex] (c) [above of=k1]{$c$};  
  \node[vertex] (d) [left = 2 of c]{$d$};  
  \node[vertex] (r) [left of=b]{$r$};  
  \node[keyword] (k2)[left of=r]  {$k_2$};
  \node[blue] (pl) at (-0.4,1.3) {$p_l$};
  \node[red] (ps) at (0.5,1.3) {$p_s$};
  
  \draw (a) edge[red] node {2.5} (k1);
  \draw (b) edge[blue] node {1} (k1);
  \draw (c) edge[blue] node[above] {1} (b);
  \draw (c) edge[red]node [below]{1} (a);
  \draw (d) edge node {1} (c);
  \draw (b) edge node [above] {1} (d);
  \draw (r) edge node [below] {1} (b);
  
  \draw (r) edge node {10} (k2);
  
  \node at (-5.5,1) {(a)};


  \node[keyword] (k1_a1) at (0,-1) {$k_1$};
  \node[vertex] (b_a1) [left of=k1_a1]{$b$};  
  \node[vertex] (r_a1) [left of=b_a1]{$r$};  
  \node[keyword] (k2_a1)[left of=r_a1]  {$k_2$};
    
  \draw (b_a1) edge node {1} (k1_a1);
  \draw (r_a1) edge node [below] {1} (b_a1);
  
  \draw (r_a1) edge node {10} (k2_a1);

  \node at (-5.5,-1) {(b)};
  \node[keyword] (k1_a2) at (0,-3.5) {$k_1$};
  \node[vertex] (a_a2) [right of=k1_a2]{$a$};
  \node[vertex] (b_a2) [left of=k1_a2]{$b$};  
  \node[vertex] (c_a2) [above  of=k1_a2]{$c$};  
  \node[vertex] (d_a2) [left = 2 of c_a2]{$d$};  
  \node[vertex] (r_a2) [left of=b_a2]{$r$};  
  \node[keyword] (k2_a2)[left of=r_a2]  {$k_2$};
    
  \draw (a_a2) edge node {2.5} (k1_a2);
  \draw (c_a2) edge node [below]{1} (a_a2);
  \draw (d_a2) edge node {1} (c_a2);
  \draw (b_a2) edge node [above] {1} (d_a2);
  \draw (r_a2) edge node [below] {1} (b_a2);
  
  \draw (r_a2) edge node {10} (k2_a2);
  \node at (-5.5,-2.8) {(c)};
\end{tikzpicture}
\caption{\label{fig:freezing_looping_pahts} 
(a)~The graph snippet (b) The first answer (c) The second answer}
\end{figure} 

Consider the data graph of Figure~\ref{fig:freezing_looping_pahts}(a).
The weights of the nodes are 1 and the weight of each edge appears next to it.
Suppose that the query is $\{k_1,k_2\}$.
We now show that the modified GTF algorithm would miss 
the answer presented in Figure~\ref{fig:freezing_looping_pahts}(c),
where the root is $r$.

The initialization step of the algorithm inserts into $Q$ 
the paths $k_1$ and $k_2$, each consisting of a single keyword.  
Next, we describe the iterations of the main loop.
For each one, we specify the path that is removed from $Q$
and the paths that are inserted into $Q$.
We do not mention explicitly how the marks are changed,
unless it eventually causes freezing.
\begin{enumerate}
\item
The path $k_1$ is removed from $Q$, and
the paths $a \rightarrow k_1$ and $b \rightarrow k_1$ are inserted into $Q$.
\item
The path $k_2$ is removed from $Q$ and
the path $r \rightarrow k_2$ is inserted into $Q$.  
\item
The path $b \rightarrow k_1$ is removed (since its weight is the
lowest on $Q$), and $r \rightarrow b \rightarrow k_1$ and
$p_l = c \rightarrow b \rightarrow k_1$ are inserted
(the latter is shown in blue in Figure~\ref{fig:freezing_looping_pahts}(a)).
\item
The path $a \rightarrow k_1$ is removed from $Q$ and the path
$p_s= c \rightarrow a \rightarrow k_1$ (shown in red) is inserted into $Q$.
\item
The path $p_l$ is removed from $Q$ and
$c.\mathit{marks}[k_1]$ is changed to $\vis$; then the path $d \rightarrow p_l$
is inserted to $Q$.
\item
The path $r \rightarrow b \rightarrow k_1$ is
removed from $Q$ and nothing is inserted into $Q$.
\item
The $p_s$ is removed from $Q$ and freezes at node $c$.
\item
The path $d \rightarrow p_l$ is removed from $Q$.
It can only be extended further to node $b$, but that would create a cycle,
so nothing is inserted into $Q$.
\end{enumerate}
Eventually, node $r$ is discovered to be a $K$-root. However,
$c.\mathit{marks}[k_1]$ will never be changed from $\vis$ to $\inan$
for the following reason.  
The minimal path that first visited $c$ (namely, $p_l$) must make a
cycle to reach $r$. Therefore, the path $p_s$ remains frozen at
node $c$ and the answer of Figure~\ref{fig:freezing_looping_pahts}(c)
will not be produced.
\section{Experiments}\label{sec:experiments}
In this appendix, we discuss the methodology and results of the experiments.
First, we start with a description of the datasets and the queries.
Then, we discuss the setup of the experiments.
Finally, we present the results.
This appendix expands the summary given in Section~\ref{sec:experiments_sum}.

\subsection{Datasets and Queries\label{sec:datasets}}
Most of the systems we tested are capable of parsing a relational
database to produce a data graph. Therefore, we selected this
approach.  The experiments were done on
Mondial\footnote{http://www.dbis.informatik.uni-goettingen.de/Mondial/}
and DBLP\footnote{http://dblp.uni-trier.de/xml/} that are commonly
used for testing systems for keyword-search over data graphs.
Mondial is a highly connected graph.
We specified 65 foreign keys in Mondial,
because the downloaded version lacks such definitions.

DBLP has only an XML version that is available for download.  In
addition, that version is just a bibliographical list.  In order to
make it more meaningful, we modified it as follows.  We replaced
the citation and cross-reference elements with IDREFS.  Also, we
created a unique element for each \emph{author, editor} and
\emph{publisher}, and replaced each of their original occurrences with
an IDREF.  After doing that, we transformed the XML dataset into a
relational database.  To test scalability, we also created a subset of DBLP.
The two versions of DBLP are called \emph{Full DBLP} and \emph{Partial DBLP}.

Table~\ref{tbl:dataset_stats} gives the sizes of the data graphs.  The
numbers of nodes and edges exclude the keyword nodes and their
incident edges.  The average degree of the Mondial
data graph is the highest, due to its high connectivity.

We manually created queries for Mondial and DBLP.
The query size varies from 2 to 10 keywords, and there
are four queries of each size.

\begin{table}[t]
\centering
\caption{\label{tbl:dataset_stats}The sizes of the data graphs}
\begin{tabular}{l | c | c | c |}

Dataset Name  &  nodes &  edges & average degree\\ \hline
Mondial & 21K 	& 86K 	& 4.04\\
Partial DBLP   &  649K & 1.6M 	& 2.48\\
Full DBLP    & 7.5M 	& 21M 	& 2.77\\
\hline
\end{tabular}
\end{table}

In principle, answers should be ranked according to their weight.
However, answers cannot be generated efficiently by increasing weight.
Therefore, a more tractable order (e.g.,~by increasing height) is used
for generating answers, followed by sorting according to the desired ranking. 
We employed the weight function described in\cite{pvldbGKS11},
because based on our experience it is effective in practice.
The other tested systems have their own weight functions. 

We have given the reference to the weight function 
so that the presentation of our work is complete.
However, this paper is about efficiency of search algorithms, rather than their effectiveness.
Therefore, measuring recall and precision is beyond the scope of this paper.

\subsection{The Setup of the Experiments}\label{sec:setup}

We ran the tests on a server with two quad-core, 2.67GHz Xeon X5550
processors and 48GB of RAM. We used Linux Debian (kernel 3.2.0-4-amd64),
Java 1.7.0\_60, PostgreSQL 9.1.13 and MySQL 5.5.35.  
The Java heap was allocated 10GB of RAM.
Systems that use a DBMS got additional 5GB of RAM for the buffer (of the DBMS).

The tests were executed on BANKS~\cite{icdeBHNCS02,vldbKPCSDK05},
BLINKS~\cite{sigmodHWYY07}, SPARK~\cite{tkdeLWLZWL11}, GTF (our
algorithm) and ParLMT~\cite{pvldbGKS11}.  The latter is a parallel
implementation of~\cite{sigmodGKS08}; we used its
variant ES (early freezing with single popping) with 8 threads.
BANKS has two versions, namely, MI-BkS~\cite{icdeBHNCS02} and
BiS~\cite{vldbKPCSDK05}. The latter is faster than the former by one
order of magnitude and we used it for the running-time comparison.
We selected these systems because their code is available.

Testing the efficiency of the systems is a bit like comparing apples
and oranges.  They use different weight functions and produce answers
in dissimilar orders.  Moreover, not all of them are capable of
generating all the answers.  We have striven to make the tests as equitable
as possible.
The exact configuration we used for each system, as well as some other details,
are given in Appendix~\ref{sec:config_details}.

We measured the running times as a function of the query size.
The graphs show the average over the four queries of each size.
The time axis is logarithmic, because there are order-of-magnitude differences
between the fastest and slowest systems.

\subsection{The Effect of Freezing}

To test the effect of freezing, we ran the naive approach (described in Section~\ref{sec:naive})
and GTF on all the 
datasets.  Usually, a query has thousands of answers,
  because of multiple paths between pairs of nodes.  We executed the
  algorithms until the first $100$, $300$ and $1,000$ answers were
  produced.  
  Then, we measured the speedup of GTF over the naive approach.
  Typical results are shown in
Figure~\ref{fig:GTFSpeedup}.
Note that smaller queries are not necessarily subsets of
  larger ones. Hence, long queries may be computed faster
  than short ones. Generally, the computation time depends on the
  length of the query and the percentage of nodes in which the
  keywords appear.

\pgfplotsset{every axis legend/.append style={at={(0.4,0.85)},anchor=south}}
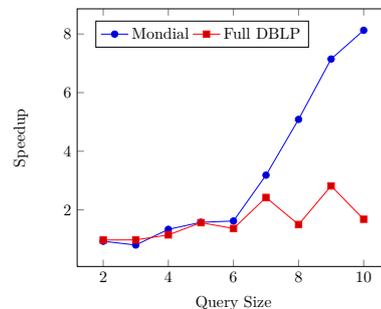
\begin{wrapfigure}{r}{0.42\textwidth}
\vspace{-20pt}
\begin{center}
  \begin{tikzpicture} [scale=0.6]
  \begin{axis}[xmode=normal,ymode=normal,xlabel={Query Size},{ylabel=Speedup},legend columns=4]
  \addplot table[x=Size,y=M_120]{datafiles/speedup.data};
  \addplot table[x=Size,y=FD_120]{datafiles/speedup.data};
  \legend{Mondial,
   Full DBLP}
  \end{axis}
  \end{tikzpicture}
\vskip-1em
\caption{\label{fig:GTFSpeedup}The speedup of GTF over the naive approach (for $100$ answers)}
\end{center}
\end{wrapfigure} 

The freezing technique gives an improvement of up to about one order of
magnitude.  It has a greater effect on Mondial than on DBLP, because
the former is highly cyclic and, therefore, there are more paths (on
average) between a pair of nodes. The naive approach produces all those paths,
whereas GTF freezes the construction of most of them until they are needed.

On all datasets, freezing has a greater effect on long queries than
short ones. This is good, because the bigger the query, the longer it
takes to produce its answers.  This phenomenon is due to the fact that the
average height of answers increases with the number of
keywords. Hence, the naive approach has to construct longer (and
probably more) paths that do not contribute to answers, whereas GTF
avoids most of that work.  In a few cases, GTF is slightly slower than
the naive approach, due to its overhead.
However, this happens for short queries that are computed very fast anyhow;
so, this is not a problem.

\pgfplotsset{every axis legend/.append style={at={(0.35,0.8)},anchor=south}}
\begin{figure*}[b]
\begin{minipage}{0.33\textwidth}
\begin{center}
  \begin{tikzpicture} [scale=0.55]
  \begin{axis}[xmode=normal,ymode=log,xlabel={Query Size},{ylabel=Time (sec)},legend columns=2]
  \addplot [color=blue!50!black, densely dotted, every mark/.append style={solid, fill=blue}, mark=*]
  		   table[x=Size,y=M_100,y error=M_100_DEV]{datafiles/BANKS.data};
  \addplot [color=red!50!black,  every mark/.append style={solid, fill=red}, mark=square*]
  		   table[x=Size,y=M_120,y error=M_120_DEV]{datafiles/GTF.data};
  \addplot [color=brown!50!black, dashdotted, every mark/.append style={solid, fill=brown}, mark=triangle*]
  		   table[x=Size,y=M_100,y error=M_100_DEV]{datafiles/SPARK.data};
  \addplot table[x=Size,y=M_120]{datafiles/ParLaw.data};
  \legend{BANKS, GTF, SPARK, ParLMT}
  \end{axis}
  \end{tikzpicture}  
\vskip-1em
\caption{\label{fig:runtimeMondial100}Mondial 100 ans.}
\end{center}
\end{minipage}
\hfill	
\begin{minipage}{0.32\textwidth}
\begin{center}
  \begin{tikzpicture} [scale=0.55]
  \begin{axis}[xmode=normal,ymode=log,xlabel={Query Size},legend columns=2]
  \addplot [color=blue!50!black, densely dotted, every mark/.append style={solid, fill=blue}, mark=*]
  		   table[x=Size,y=M_300,y error=M_300_DEV]{datafiles/BANKS.data};
  \addplot [color=red!50!black,  every mark/.append style={solid, fill=red}, mark=square*]
           table[x=Size,y=M_320,y error=M_320_DEV]{datafiles/GTF.data};
  \addplot [color=brown!50!black, dashdotted, every mark/.append style={solid, fill=brown}, mark=triangle*]
           table[x=Size,y=M_300,y error=M_300_DEV]{datafiles/SPARK.data};
  \addplot table[x=Size,y=M_320]{datafiles/ParLaw.data};
  \end{axis}
  \end{tikzpicture}  
\vskip-1em
\caption{\label{fig:runtimeMondial300}Mondial 300 ans.}
\end{center}
\end{minipage}
\hfill
\begin{minipage}{0.32\textwidth}
\begin{center}
  \begin{tikzpicture} [scale=0.55]
  \begin{axis}[xmode=normal,ymode=log,xlabel={Query Size},legend columns=4]
  \addplot [color=blue!50!black, densely dotted, every mark/.append style={solid, fill=blue}, mark=*]
  			table[x=Size,y=M_1000,y error=M_1000_DEV]{datafiles/BANKS.data};
  \addplot [color=red!50!black,  every mark/.append style={solid, fill=red}, mark=square*]
  			table[x=Size,y=M_1020,y error=M_1020_DEV]{datafiles/GTF.data};
  \addplot [color=brown!50!black, dashdotted, every mark/.append style={solid, fill=brown}, mark=triangle*]
  			table[x=Size,y=M_1000,y error=M_1000_DEV]{datafiles/SPARK.data};
  \addplot table[x=Size,y=M_1020]{datafiles/ParLaw.data};
  \end{axis}
  \end{tikzpicture}  
\vskip-1em
\caption{\label{fig:runtimeMondial1000}Mondial 1000 ans.}
\end{center}
\end{minipage}
\end{figure*}
 
\subsection{Comparison with Other Systems}\label{sec:comparison}

In this section, we compare the running times of the systems we
tested.  We ran each system to produce $100$, $300$ and $1,000$ answers
for all the queries and datasets.
Typical results are shown in
Figs~\ref{fig:runtimeMondial100}--\ref{fig:runtimeDblp100}.

GTF is almost always the best, sometimes by several orders of magnitude, 
except for a few cases in which it is second by a tiny margin.  

When running SPARK, many answers contained only some of the keywords, 
even though we set it up for the AND semantics (to be the same as the other
systems).

On Mondial, SPARK is generally the second best. 
Interestingly, its running time increases only slightly as the number of
answers grows. In particular, Figure~\ref{fig:runtimeMondial1000} shows that for
$1,000$ answers, SPARK is the second best by a wide margin, compared with 
BANKS which is third.
The reason for that is the way SPARK works. It produces expressions
(called \emph{candidate networks}) that are evaluated over a database.
Many of those expression are \emph{empty}, that is, produce no answer at all.
Mondial is a small dataset, so the time to compute a non-empty expression
is not much longer than computing an empty one. Hence, the running time hardly
depends on the number of generated answers.

However, on the partial and full versions of DBLP (which are large
datasets), SPARK is much worse than BANKS, even when producing only
$100$ answers (Figures~\ref{fig:runtimePdblp100}
and~\ref{fig:runtimeDblp100}).  We conclude that by and large BANKS
is the second best.

In Figures~\ref{fig:runtimeMondial100}--\ref{fig:runtimeMondial1000},
the advantage of GTF over BANKS grows as more answers are produced.
Figures~\ref{fig:runtimePdblp100} and~\ref{fig:runtimeDblp100} show
that the advantage of GTF increases as the dataset becomes larger.  We
conclude that GTF is more scalable than other systems when either the
size of the dataset or the number of produced answers is increased.

The running time of ParLMT grows linearly with the number of answers,
causing the gap with the other systems to get larger (as more answers
are produced).  This is because ParLMT solves a new optimization
problem from scratch for each answer.  In comparison, BANKS and GTF
construct paths incrementally, rather than starting all over again at
the keyword nodes for each answer.

ParLMT got an out-of-memory exception when computing long queries on
Full DBLP (hence, Figure~\ref{fig:runtimeDblp100} shows the
running time of ParLMT only for queries with at most six keywords).
This means that ParLMT is memory inefficient compared with the other systems.

BLINKS is the slowest system.  On Mondial, the average running time to
produce 100 answers is 46 seconds.  Since this is much worse than the
other systems,
Figures~\ref{fig:runtimeMondial100}--\ref{fig:runtimeMondial1000} do not
include  the graph for BLINKS.  On Partial DBLP, 
BLINKS is still the slowest (see Figure~\ref{fig:runtimePdblp100}).
On Full DBLP, BLINKS got an out-of-memory exception
during the construction of the indexes.
This is not surprising, because they store their indexes solely in main memory.

\pgfplotsset{every axis legend/.append style={at={(0.5,1.03)},anchor=south}}

\begin{figure}[b]
\begin{center}
\begin{minipage}{0.45\textwidth}
  \begin{tikzpicture} [scale=0.6]
  \begin{axis}[xmode=normal,ymode=log,xlabel={Query Size},{ylabel=Time (sec)},legend columns=5]
  \addplot [color=blue!50!black, densely dotted, every mark/.append style={solid, fill=blue}, mark=*]
  			table[x=Size,y=PD_100,y error=PD_100_DEV]{datafiles/BANKS.data};
  \addplot [color=red!50!black,  every mark/.append style={solid, fill=red}, mark=square*]
  			table[x=Size,y=PD_120,y error=PD_120_DEV]{datafiles/GTF.data};
  \addplot [color=brown!50!black, dashdotted, every mark/.append style={solid, fill=brown}, mark=triangle*]
  			table[x=Size,y=PD_100]{datafiles/SPARK.data};
  \addplot table[x=Size,y=PD_120]{datafiles/ParLaw.data};
  \addplot [color=green!30!black, dashed, every mark/.append style={solid, fill=green!50!black}, mark=diamond*]
  		   table[x=Size,y=PD_100]{datafiles/BLINKS.data};
  \legend{BANKS,GTF,SPARK,ParLMT, BLINKS}
  \end{axis}
  \end{tikzpicture}  
\caption{\label{fig:runtimePdblp100}Partial DBLP 100 answers}
\end{minipage}
\hfill
\begin{minipage}{0.45\textwidth}
  \begin{tikzpicture} [scale=0.6]
  \begin{axis}[xmode=normal,ymode=log,xlabel={Query Size},{ylabel=Time (sec)},legend columns=4]
  \addplot [color=blue!50!black, densely dotted, every mark/.append style={solid, fill=blue}, mark=*]
  			table[x=Size,y=FD_100,y error=FD_100_DEV]{datafiles/BANKS.data};
  \addplot [color=red!50!black,  every mark/.append style={solid, fill=red}, mark=square*]
  			table[x=Size,y=FD_120,y error=FD_120_DEV]{datafiles/GTF.data};
  \addplot [color=brown!50!black, dashdotted, every mark/.append style={solid, fill=brown}, mark=triangle*]
  		   table[x=Size,y=FD_100]{datafiles/SPARK.data};
  \addplot table[x=Size,y=FD_120]{datafiles/ParLaw.data};
  \legend{BANKS, GTF, SPARK, ParLMT}
  \end{axis}
  \end{tikzpicture}  
\caption{\label{fig:runtimeDblp100}Full DBLP 100 answers}
\end{minipage}
\end{center}
\end{figure}
\begin{table}[t]
\centering
\caption{\label{tbl:coef_var} Average CV
for $100$, $300$ and $1000$ answers}
\begin{tabular}{l|l | l | l |}

System	&	\multicolumn{1}{c|}{Mondial}	&	\multicolumn{1}{c|}{Partial DBLP}	&	\multicolumn{1}{c|}{Full DBLP}	\\ \hline
GTF		&	0.54, 0.65, 0.57	&	0.57, 0.57, 0.47	&	0.61, 0.58, 0.57	\\
BANKS	&	0.49, 0.38, 0.40	&	0.51, 0.43, 0.35	&	0.67, 0.62, 0.52	\\
ParLaw	&	0.60, 0.71, 0.73	&	0.66, 0.65, 0.67	&	0.59, 0.59	\\
SPARK	&	0.23, 0.24, 0.22	&	0.37, 0.36, 0.38	&	0.62, 0.63, 0.75	\\
BLINKS	&	0.74, 0.73, 0.75	&	0.38, 0.38, 0.38	&		\\ \hline

\end{tabular}
\end{table}

Table~\ref{tbl:coef_var} gives the averages of the
coefficients of variation (CV) for assessing the confidence in the
experimental results.  These averages are for each system, dataset
and number of generated answers. Each average is over all query
sizes.  Missing entries indicate out-of-memory exceptions.  Since
the entries of Table~\ref{tbl:coef_var} are less than $1$, it means
that the experimental results have a low variance.  It is not surprising that
sometimes the CV of GTF and ParLaw are the largest, because these
are the only two systems that can generate all the
answers. Interestingly, on the largest dataset, GTF has the best CV,
which is another testament to its scalability.  SPARK has the
best CV on small datasets, because (as mentioned earlier)
its running times are hardly affected by either the query size or the
selectivity of the keywords.

\subsection{The Importance of  Producing all Answers}\label{sec:all}

The efficiency of GTF is a significant achievement especially in light
of the fact that it does not miss answers (i.e.,~it is capable of
generating all of them). BANKS, the second most efficient system
overall, cannot produce all the answers. In this section, we
experimentally show the importance of not missing answers.

BANKS has two variants: Bidirectional Search (BiS)~\cite{vldbKPCSDK05}
and Multiple Iterator Backward Search (MI-BkS)~\cite{icdeBHNCS02}.
Figures~\ref{fig:runtimeMondial100}--\ref{fig:runtimeDblp100} show the
running times of the former, because it is the faster of the two.
MI-BkS is at least an order of magnitude slower than BiS. For example,
it takes MI-BkS an average (over all query sizes) of $2.7$
seconds to produce 100 answers on Mondial.  However, as noted
in~\cite{vldbKPCSDK05}, BiS misses more answers than MI-BkS,
because it produces just a single answer for each root.  MI-BkS is
capable of producing several answers for the same root, but all of
them must have a shortest path from the root to each keyword
(hence, for example, it would miss the answer $A_3$ of 
Figure~\ref{fig:answers}).
To test how many answers are missed by BANKS, we ran GTF on all the queries and measured
the following in the generated results.  First, the percentage of
answers that are rooted at the same node as some previously created
answer.  Second, the percentage of answers that contain at least one
non-shortest path from the root to some node that contains
  a keyword of the query.\footnote{We did that after removing the
  keyword nodes so that answers have the same structure as in BANKS.}
The former and the latter percentages show how many answers are missed
by BiS and MI-BkS, respectively, compared with GTF.  The results are
given in Table~\ref{tbl:banks_misses}.
For example, it shows that on Mondial, among the
first 100 answers of GTF, there are only $5$ different roots. So, BiS
would miss $95$ of those answers. Similarly, BiS is capable of
generating merely $20$ answers among the first $1,000$ produced by
GTF.  Also the Table~\ref{tbl:banks_misses} shows that MI-BkS is more
effective, but still misses answers.  On Mondial, it misses $32$ and $580$ answers among
the first $100$ and $1,000$, respectively, generated by GTF.

Generally, the percentage of missed answers increases with the number
of generated answers and also depends heavily on the dataset.  It is
higher for Mondial than for DBLP, because the former is highly
connected compared with the latter.  Therefore, more answers rooted at
the same node or containing a non-shortest path from the root to a
keyword could be created for Mondial.  Both GTF and BANKS enumerate answers by
increasing height.  Since BANKS misses so many answers, it must
generate other answers having a greater height, which are likely to be
less relevant.
\begin{table}[t]
\centering
\caption{\label{tbl:banks_misses}Percentage of answers missed by BiS (i.e.,~have same root as some previous answer)
and MI-BkS (i.e.,~contain non-shortest paths)}
\begin{tabular}{l|l | c | c | c |}
Algorithm & Dataset 		& 100 ans. & 300 ans. & 1000 ans. \\ \hline
\multirow{3}{*}{BiS}&Mondial			&	95	&	97	&	98	\\
					&Partial DBLP	&	85	&	88	&	90	\\
					&Full DBLP		&	81	&	86	&	90	\\ \hline
\multirow{3}{*}{MI-BkS}& Mondial			&	32	&	43	&	58	\\
					   &Partial DBLP	&	10	&	12	&	16	\\
					   &Full DBLP		&	1.8	&	6.1	&	5.8	\\ \hline

\end{tabular}
\end{table}
\section{Configuration of the Systems and Methodology}\label{sec:config_details}
To facilitate repeatability of our experimental results,
we describe how we configured each of the tested systems.

For BANKS, we tested both of their variants, but the figures show the
running times of the more efficient one, namely, bidirectional search.
We disabled the search of keywords in the metadata, because it is not
available in some of the other systems.  The size of the heap was set
to $20$, which means that BANKS generated $120$ answers in order to output $100$.
For the sake of a fair comparison, GTF was also set to generate $20$ more
answers than it was required to output.
In addition,
BANKS produced the answers without duplicates, with respect to
undirected semantics.  Therefore, we also executed GTF 
until it produced $120$, $320$ or $1,020$ answers without
duplicates.\footnote{Duplicate elimination in GTF 
is done by first
  removing the keyword nodes so that the structure of the answers is
  the same as in BANKS.}  For clarity's sake,
Section~\ref{sec:experiments} omits these details and describes all
the algorithms as generating $100$, $300$ and $1,000$ answers.  In
BANKS, as well as in GTF,
redundant answers (i.e.,~those having a root with a single child)
were discarded, that is, they were not included among the generated answers.
The other systems produced $100$, $300$ or $1,000$ according to their default
settings.

For SPARK, we used the block-pipeline algorithm, because it was their
best on our datasets. The size of candidate networks (CN)
was bounded by 5. The \emph{completeness factor} was set to 2,
because the authors reported that this value enforces the AND
semantics for almost all queries.

In BLINKS, we used a random partitioner.  The minimum size of a
partition was set to 5, because a bigger size caused an out-of-memory
error.  BLINKS stores its indexes in the main memory, therefore, when
running BILNKS, we reserved 20GB (instead of 10GB) for the Java heap.

BANKS and SPARK use PostgreSQL and MySQL, respectively. 
In those two DBMS, the default buffer size is tiny.
Hence, we increased it to 5GB (i.e.,~the parameters \texttt{shared\_buffers} of
PostgreSQL and \texttt{key\_buffer} of MySQL were set to 5GB).
 
For each dataset, we first executed a long query to warm
the system and only then did we run the experiments on that dataset.
To allow systems that use a DBMS to take full advantage of the buffer
size, we made two consecutive runs on each dataset (i.e.,~executing
once the whole sequence of queries on the dataset followed immediately
by a second run).  This was done for \emph{all} the systems.
The reported times are those of the second run.

The running times do not include the parsing stage or
translating answers into a user-readable format. 
For BLINKS, the time needed to construct the indexes is ignored.

\end{document}